\newcommand\tr{{\mathpalette\raiseT{\intercal}}} 
\newcommand\raiseT[2]{\raisebox{0.25ex}{$#1#2$}} 
\newenvironment{prooft}[1]{\begin{proof}[Proof of #1]}{\end{proof}}
\newcommand{\older}[1]{}
\newcommand{\R}[1]{\boldsymbol{#1}}
\newtheoremstyle{jorisstyle}
{3pt}
{3pt}
{}
{}
{\bfseries}
{}
{ }
{\thmname{#1}\thmnumber{ #2'}\thmnote{\bfseries~#3}.}
\DeclareRobustCommand\<[1][black]{\begin{equation}{\color{#1}}}
\DeclareRobustCommand\>{\end{equation}}
\newcommand\ot\leftarrow
\newcommand{\e}[2]{\epsilon_{#1#2}}							
\newcommand{\convd}{\stackrel{d}{\to}}
\theoremstyle{definition}
\theoremstyle{jorisstyle}
\newtheorem*{just*}{Justification}
\theoremstyle{plain}
\newtheorem{thm}{Theorem}
\newcommand{\sB}{\mathscr{B}}
\renewcommand{\e}{\mathrm{e}}
\theoremstyle{plain}
\newtheorem{lem}{Lemma}
\newtheorem*{thm*}{Theorem}
\newtheorem{ex}{Example}
\definecolor{dkgreen}{rgb}{0,0.6,0}
\definecolor{gray}{rgb}{0.5,0.5,0.5}
\definecolor{mauve}{rgb}{0.58,0,0.82}
\newcommand{\Reals}{\mathbb{R}}
\DeclarePairedDelimiter{\parens}{(}{)}
\DeclarePairedDelimiter\cparens\{\}
\DeclarePairedDelimiter\sparens[]
\DeclarePairedDelimiter{\abs}\vert\vert
\providecommand\given{}
\newcommand\Symbol[1][]{%
	\nonscript\:#1\vert
	\allowbreak
	\nonscript\:
	\mathopen{}}
\DeclarePairedDelimiterX\condr[1](){\renewcommand\given{\Symbol[\delimsize]}#1}
\DeclarePairedDelimiterX\condc[1]\{\}{\renewcommand\given{\Symbol[\delimsize]}#1}
\DeclarePairedDelimiterX\conds[1][]{\renewcommand\given{\Symbol[\delimsize]}#1}
\DeclarePairedDelimiterX\condn[1]{}{}{\renewcommand\given{\Symbol[\delimsize]}#1}
\newcommand{\Exp}{\mathbb{E}}
\newcommand{\Expcr}[2][]{\Exp\condr[#1]{#2}}
\newcommand{\dif}{\:\mathrm{d}}
\let\Set\condc
\DeclareMathOperator{\Cov}{Cov}
\crefname{equation}{}{}
\Crefname{equation}{Equation}{Equations}
\crefname{proposition}{proposition}{propositions}
\crefname{figure}{figure}{figures}
\crefname{table}{table}{tables}
\crefname{lem}{lemma}{lemmas}
\crefname{thm}{theorem}{theorems}
\crefname{corollary}{corollary}{corollaries}
\crefname{ass}{assumption}{assumptions}
\crefname{enumi}{part}{parts}
\crefname{ex}{example}{examples}
\newcommand{\maligned}[1]{\left\{\begin{aligned}#1\end{aligned}\right.}
\newcommand{\sV}{\mathscr{V}}
\newcommand{\sP}{\mathscr{P}}
\newcommand{\comment}[1]{}
\newcommand{\one}{\mathbb{1}}
\newcommand{\drop}[1]{\todo[inline,color=LightGrey]{stuff at least one of us thinks can be dropped has been omitted here.}}
\newcommand{\lh}{\R{\hat\ell}}
\newcommand{\Rh}{\R{\hat R}}
\newcommand{\Bh}{\R{\hat B}}
\newcommand{\bt}{\tilde{\beta}}
\newcommand{\ah}{\R{\hat a}}
\let\amstexbig\big
\def\newbig#1{%
	\ifx#1|%
	\expandafter\@firstoftwo
	\else
	\expandafter\@secondoftwo
	\fi
	{\big@bar}%
	{\amstexbig{#1}}%
}
\def\big@bar{\bBigg@{1.1}|}
\newcommand{\fhat}{\R{\hat f}}
\newcommand{\Fhat}{\R{\hat F}}
\newcommand{\Lhat}{\R{\hat L}}
\newcommand{\dsum}{\displaystyle\sum}
\newcommand{\dint}{\displaystyle\int}
\newcommand{\Nhat}{\R{\hat N}}
\newcommand{\Dhat}{\R{\hat D}}
\newcommand{\ignore}[1]{}
\newcommand{\phat}{\R {\hat p}}
\begin{document}
\title{Estimates of derivatives of (log) densities and related objects}
\author{Joris Pinkse and Karl Schurter\thanks{Corresponding author \url{kschurter@psu.edu}} }
\date{May 2020}

\maketitle
\begin{abstract}
We estimate the density and its derivatives using a local polynomial approximation to the logarithm of an unknown density $f$. The estimator is guaranteed to be nonnegative and achieves the same optimal rate of convergence in the interior as well as the boundary of the support of $f$. The estimator is therefore well--suited to applications in which nonnegative density estimates are required, such as in semiparametric maximum likelihood estimation. In addition, we show that our estimator compares favorably with other kernel--based methods, both in terms of asymptotic performance and computational ease. Simulation results confirm that our method can perform similarly in finite samples to these alternative methods when they are used with optimal inputs, i.e.\ an Epanechnikov kernel and optimally chosen bandwidth sequence. Further simulation evidence demonstrates that, if the researcher modifies the inputs and chooses a larger bandwidth, our approach can even improve upon these optimized alternatives, asymptotically.  We provide code in several languages.
\end{abstract} 
\clearpage


\section{Introduction}
\label{section:intro}

We propose a new nonparametric estimator for (the logarithm) of a density function and its derivatives that attains the optimal rate of convergence both in the interior and at the boundary of the support.  Our density estimator is available in closed form and is guaranteed to be positive unlike several alternatives, which is appealing in some applications and critical in others, such as in semiparametric maximum likelihood estimation \citep[see e.g.][]{klein1993efficient}.\footnote{Klein and Spady square their density estimates to ensure positivity.}  The new methodology differs from the previous literature in that it first estimates a function's derivatives, which, if desirable, can then be used to construct an estimate of the function itself.  Our general estimation strategy can also be applied to obtain estimates of other quantities of economic interest, including the density in regression discontinuity design models, the (reciprocal) of the propensity score, the inverse bid function in auction models, and any other application in which the density appears inside a logarithm or a denominator.\footnote{The object of interest here is nonparametric in nature, i.e.\ it does not necessarily get averaged out as it does in e.g.\ \citet{lewbel2007simple}.}

Specifically, we consider an i.i.d.\ sequence of random variables $\Set{\R x_1,\dots, \R x_n}$ with $\R x_i$ distributed according to some unknown distribution $F$ with density $f>0$ on its support $[0,\mathscr{U})$, where $\mathscr{U}$ can be infinite. 
  The standard Rosenblatt--Parzen (RP) kernel density estimator is inconsistent at the boundary and is typically badly biased in finite samples at values of $x$ near the boundary. In contrast, our method employs a local polynomial approximation of $L\parens{x}=\log f\parens{x}$ to obtain asymptotically normal estimates of $L$ and its derivatives, away from, at, or near the boundary.
 
An advantage of using a polynomial approximation to the log--density instead of the density is that the estimated density can be guaranteed to be positive, which is not true for alternative boundary correction methods that use boundary kernels or a local polynomial approximation of $f$ \citep{Cheng1997auto,Karunamuni1998endpoints} of $F$ \citep{Lejeune1992smooth,Cattaneo2019simple}.\footnote{An exception is \citet{jones1996simple}.} Unlike \cite{Loader1996likelihood} and \cite{Hjort1996local}, however, the computation of our estimator does not require solving a nonlinear system of equations that involves numerical integration. In fact, the estimator of derivatives of $L$ may be expressed as the solution to a linear system of (weighted) local averages. Thus, our method can be characterized as a local method of moments, similar in spirit to local likelihood density estimation \citep[and much subsequent work]{Loader1996likelihood, Hjort1996local} but computationally more similar to local polynomial regression \citep{Lejeune1992smooth, Cheng1997auto, Karunamuni1998endpoints,Cattaneo2019simple}. We therefore retain the computational ease of a local polynomial regression while eliminating the possibility of negative density estimates.  The estimator of $L,f$ itself then obtains in explicit form with estimates of the derivatives of $L$ as inputs.

Apart from its numerical advantages, our estimator for the density has the same first order asymptotic properties when applied with the same bandwidth as the local likelihood estimator. When applied with a larger bandwidth, however, our estimator achieves a smaller asymptotic mean squared error. We cannot generally compare the bias of our method with the biases of methods that use a polynomial approximation to $f$, but our estimator has the same asymptotic variance as traditional methods when they are applied using an optimal kernel and bandwidth sequence. Hence, our local polynomial approximation to $L$ can be expected to outperform alternative estimators for $f$ in finite samples when our bias is smaller, e.g.\ when the log--density is in fact polynomial.

In large enough samples, an asymptotically unbiased version of our estimator achieves a smaller variance and therefore a smaller mean square error than the optimized alternatives. Importantly, our estimator realizes this improved performance without sacrificing nonnegativity and continuity of the estimated density, as would be required to achieve the same asymptotic distribution using alternative methods.\footnote{One could replace negative density estimates produced by alternative methods by zero, but that is both clunky and would not help in cases in which the density must be positive.}

We also note that the log--density or its derivatives may be of direct interest to the researcher, in which case our method may be an attractive alternative to transforming estimates of $f$ and its derivatives to obtain the desired estimates. For instance, the generalized reflection method of \cite{Karunamuni2005boundary} and \cite{Karunamuni2008some} requires an estimate of $L'\parens{0}$ which they obtain using a finite difference approximation. Estimates of $L$ can moreover be used as an input into other objects.  One case that has already been mentioned is semiparametric maximum likelihood estimation, of which \citet{klein1993efficient} is a classical example in which the likelihood objective can be written as a function of the log--density.  But there are other important examples.  For instance, in regression discontinuity design, estimation of the density at the discontinuity point can be of interest \citep[see e.g.][]{Cattaneo2019simple}.  A second example would be the estimation of propensity scores which are of importance in the estimation of treatment effects.  A final example is that of the estimation of auction models for which a version of our estimator can be used to obtain direct estimates of the inverse strategy function; see e.g.\ \citet{hickman2015replacing,pinkse2019estimation}.  These examples and more are discussed in \cref{section:applications}.  Finally, we provide code in several languages, including Julia and R at \url{https://github.com/kschurter/logdensity}.

\newcommand{\mhx}{m_{hx}}
\newcommand{\bh}{\R{\hat\beta}}
\makeatletter
\newcommand{\vast}{\bBigg@{3}}
\newcommand{\Vast}{\bBigg@{4}}
\newcommand{\vastl}{\mathopen\vast}
\newcommand{\vastm}{\mathrel\vast}
\newcommand{\vastr}{\mathclose\vast}
\newcommand{\Vastl}{\mathopen\Vast}
\newcommand{\Vastm}{\mathrel\Vast}
\newcommand{\Vastr}{\mathclose\Vast}
\makeatother

\section{Estimator}
\label{section:estimator}

We now discuss our main estimator, postponing the discussion of applications and variants to \cref{section:applications}.
Let $L\parens{y} = \log f(y)$ denote the log--density function and assume it to possess at least $S+1\geq 2$ derivatives at $x$, the point at which we wish to estimate $L$. Our estimator will be in the kernel family of estimators and we denote our bandwidth by $h$. To specifically allow for $x$ approximating the boundary, we introduce the notation $z = \min\parens{x/h, 1}$. 

In the first step of our estimation procedure, we estimate derivatives of $L$, which are subsequently used to construct an estimate of $L$ itself.
Our estimator of derivatives of $L$ is based on the fact that for any differentiable function $g:\Reals \to \Reals^S$ with support $\sparens{-z,1}$ and for which $g\parens{-z}=g\parens{1}=0\in\Reals^{S}$, we have
\begin{multline}
\label{eq:population system}
\frac1{h^2}\int_{x-zh}^{x+h} g'\parens[\Big]{\frac{y-x}{h}} f\parens{y} \dif y = -\frac{1}{h}\int_{x-zh}^{x+h} g\parens[\Big]{\frac{y-x}{h}}L'\parens{y}f\parens{y}\dif y\\
= - \frac1h \int_{x-zh}^{x+h} g\parens[\Big]{\frac{y-x}{h}} \sum_{s = 1}^{S+1} \frac{ L^{\parens{s}}\parens{x} \parens{y-x}^{s-1} }{\parens{s-1}!}f(y)\dif y + o\parens[\big]{h^S},
\end{multline}
where $L^{\parens{s}}$ denotes the $s$--th derivative.  The above follows from integration by parts under the assumption that $f$ is bounded on the domain of integration and a Taylor expansion of $L'$ around $x$. We define $\beta_s = L^{\parens{s}}\parens{x}$ and gather these coefficients into a vector $\beta = \sparens{\beta_1 \dots \beta_S}^\tr$.   We then estimate integrals on the right and left sides of \cref{eq:population system} by their sample analogs and estimate $\beta$ by solving
\< \label{eq:sample system}
  \frac1{nh^2} \sum_{i=1}^n g'\parens[\Big]{\frac{\R x_i-x}{h}} =  -\sum_{s=1}^S \bh_s \frac1{nh}\sum_{i=1}^n g\parens[\Big]{\frac{\R x_i-x}{h}}\frac{(\R x_i - x)^{s-1}}{\parens{s-1}!}\,.
\>
Since \cref{eq:sample system} is linear in $\bh$, the solution will generally be unique. Moreover, we can choose $g$ such that our estimator $\bh$ of $\beta$ is in closed form. 

There are many functions $g$ that satisfy the desiderata outlined above.  For the purpose of providing examples, we choose $g$ to be a vector whose $j$--th element is
\[
g_j\parens{t} = \parens{t+z}^{j}\parens{1-t} \one\parens{-z \leq t \leq 1}.
\]
In \cref{section:optimalg} we describe the sense in which this choice of $g$ is in fact optimal.

\begin{ex}
If $S=1$ then $\bh_1$ is simply the derivative of the logarithm of the kernel density estimator with kernel
$6\one\parens{-z\leq t \leq 1}\cparens{z+t\parens{1-z}-t^2} / \parens{1+z}^3$, which simplifies to an Epanechnikov kernel if $z=1$.  \qed
\end{ex} 

\begin{ex}
\label{ex:relation to kernel regression}
If $z=1$ and $g_{j}(u) = k(u) u^{j-1}$ for some symmetric, nonnegative kernel function $k$ then \cref{eq:population system} represents the first--order condition for the minimizer of the least squares criterion in a local polynomial regression of $L'(x_{i})$ on $x_{i}$, which would be infeasible because $L'(x_{i})$ is not observed.
\end{ex}
\Cref{ex:relation to kernel regression} illustrates that the integration by parts in \cref{eq:population system} can be viewed as a device for obtaining a feasible set of local moment conditions from an infeasible set of moment conditions involving $L'$. Thus, $g$ fulfills a role similar to a kernel, but the restrictions we impose are different. Indeed, we require $g(-z)=g(1)=0$ so that $g(1)f(x+h)-g(-z)f(x-zh)$ is zero after integration by parts. 

Now, once we have an estimator $\bh$ of the derivatives of $L$ at $x$, we can use it to construct estimators of $f\parens{x}$ and $L\parens{x}$.  Indeed, substituting our approximation for the log--density in $\int_{x-zh}^{x+h} m_z\parens{x}f\parens{x}\dif x$ and rearranging suggests an estimator for $\beta_0$ similar to that of \citet{Loader1996likelihood}:
\< \label{eq:fhat def}
 \fhat\parens{x} = \frac{\fhat_m\parens{x}}{ \dint_{-z}^1 m_z\parens{t} \exp\parens[\Big]{\dsum_{s=1}^S \frac{\bh_s t^s h^s}{s!} } \dif t }, 
\>
where $\fhat_m$ is a RP estimator using a nonnegative kernel $m_z$ with support $\sparens{-z,1}$.  It should be apparent that $\fhat\parens{x}$ cannot be negative and is zero only if there are no data on the interval $\sparens{x-zh,x+h}$.

If $z<1$ then $m_z$ can be thought of as a traditional boundary kernel\footnote{$k / \int_{\max\parens{x/h-1,0}} k$; see e.g.\ \citet{gasser1979kernel}.} such that the bias in the numerator of \cref{eq:fhat def} is $O\parens{h}$.  The role of the denominator is that it is an (asymptotically) biased estimator of the number one.  Indeed, the denominator bias compensates for the numerator bias such that for $S=1$, the bias of $\fhat\parens{x}$ is again $O\parens{h^2}$.  
We define $\Lhat\parens{x} =\log \fhat\parens{x}$ and show that its asymptotic bias is $\beta_{S+1} h^{S+1}$ times a constant that is independent of both $n$ and fixed $x$.\footnote{For the case $x=zh$, the asymptotic bias does depend on $z$.}

Computing $\fhat$ is relatively simple because $\bh$ is simply a local least squares statistic and $\fhat$ is a ratio.  Although $\fhat$'s denominator contains an integral, for values of $S\leq 2$, which will be the most common scenario, the denominator in \cref{eq:fhat def} obtains in closed form if $m_z$ is a truncated Epanechnikov; for $S=1$ this is demonstrated in \cref{ex:denom} below.\footnote{For $S=2$ and $\bh_1<0$ it would however involve a normal distribution function and for $\bh_1>0$ a similar integral.}  For other kernels and greater values of $S$, an asymptotically equivalent closed form expression can be obtained by expanding the denominator in \cref{eq:fhat def} in terms of exponential Bell polynomials \citep{bell1927partition}.\footnote{For $S=1$ we
get that that the exponential in the denominator in \cref{eq:fhat def} can be written as
$1+\bh_1 th + \bh_1^2 t^2 h^2 + o_p\parens{h^2}$.  For $S=2$ we get
$1+\bh_1 th + \parens{2\bh_2 +\bh_1^2}t^2h^2 + \parens{3\bh_1\bh_2+\bh_1^3}t^3h^3 + o_p\parens{h^3}$.} Standard numerical integration methods can also be applied to this integral, in which case an advantage of our method is that this integral only needs to be computed once rather than at each iterate of a maximization routine, as in a local maximum likelihood approach.

The following examples compare the asymptotic behavior of our estimator and traditional approaches to kernel density estimation at and away from the boundary.  \Cref{ex:denom} obtains a closed form expression for the denominator in \cref{eq:fhat def}, which is used in \cref{ex:Epanechnikov away from the boundary S=1,ex:Epanechnikov at the boundary S=1} to obtain explicit expressions for the special cases $z=1$ and $z=0$ (away from the boundary and at the boundary, respectively).
\begin{ex}\label{ex:denom}
Let $\nu=\nu\parens{z} = 3 /\parens{2+3z-z^3} = 3 / \cparens{\parens{1+z}^2\parens{2-z}}$ and suppose that $m_z$ is a truncated Epanechnikov, i.e.\
$m_z\parens{t} = \nu \parens{1-t^2} \one\parens{-z\leq t\leq 1}$.  If $S=1$ then the denominator in \cref{eq:fhat def} is $\chi_1\parens{\bh_1 h}$, where
\[
\chi_1\parens{t} =
\begin{cases}
1, & t = 0, \\
\nu t^{-3}
\sparens[\big]{
\parens{2t - 2} \e^t
-
\cparens[\big]{-2-2tz + t^2 \parens{1-z^2} } \e^{-zt}
 },  & t \neq 0.
\end{cases} \tag*{\raisebox{-4ex}{\qed}}
\]
\end{ex}

\begin{ex} \label{ex:Epanechnikov away from the boundary S=1}
Suppose $S=1$.  If $x\geq h$ then $g\parens{t} = - \parens{1-t^2} \one\parens{\abs{t}\leq 1}$, which is proportional to minus the Epanechnikov kernel.  So then $\bh_1$ is simply the derivative of the logarithm of the RP estimator using the Epanechnikov kernel.  The denominator in \cref{eq:fhat def} is then exactly
\[
\begin{cases}
 \dfrac3{2\bh_1^3h^3} \cparens[\big]{ \exp\parens{\bh_1 h} \parens{\bh_1h-1} + \exp\parens{-\bh_1 h} \parens{\bh_1h+1} }, & \bh_1 \neq 0,\\
 1, & \bh_1 =0.
 \end{cases}
\]
The denominator can be expanded around $h=0$ to obtain the approximation
\(
 \Lhat\parens{x} \approx \log\fhat_m\parens{x} - \log \parens[\big]{1+ \bh_1^2 h^2/10}.
\)
It is well--known that the bias of $\log\fhat_m\parens{x}$ is $h^2 f''\parens{x} \mathrel/ 5 f\parens{x} + o\parens{h^2}$.  The bias of $\Lhat\parens{x}$ is by the mean value theorem then seen to be
$h^2 \beta_2 / 5 + o\parens{h^2}$.  Thus, the bias we introduced in the denominator offsets the bias present in the numerator.
  \qed	
\end{ex}

In \cref{ex:Epanechnikov away from the boundary S=1} we took $x>h$ and hence $z=1$ to provide intuition.  In the following example we consider what happens at the boundary, i.e.\ if $x=z=0$.

\begin{ex} \label{ex:Epanechnikov at the boundary S=1}
Again suppose that $S=1$, but now let $x=z=0$.  Then $g\parens{t} = - \parens{1-t}t$, such that $\bh_1$ is now the derivative of the kernel density estimator at zero using the kernel $6\parens{1-t}t\one\parens{t\leq 1}$.  

If we again use an Epanechnikov in \cref{eq:fhat def} then now the denominator becomes for $\bh_1 \neq 0$,
\[
\frac32 \frac{2 + \bh_1^2 h^2 - \exp\parens{\bh_1 h}\parens{2-2\bh_1h}}{\bh_1^3h^3}
=
  1 + \frac{3\bh_1h}8 + \frac{\bh_1^2h^2}{10} + o\parens{h^2}.
\]
Our results show that the bias of $\bh_1$ is $\beta_2 h/2 + o\parens{h}$, such that the denominator bias is $3\beta_1 h/8 + 3\beta_2h^2/16 + \beta_1^2 h^2/10 + o \parens{h^2}$.
Further,
\[
  \Exp\fhat_m\parens{0} 
 =
 f\parens{0} + \frac{3hf'\parens{0}}8 + \frac{f''\parens{0}h^2}{10} + o\parens{h^2},
\]
such that the bias of $\fhat\parens{0}$ is now
$-7h^2 \beta_2 f\parens{x}/ 80 + o\parens{h^2}$.  The bias of $\Lhat\parens{0}$ is by the delta method hence $-7h^2\beta_2 /80+o\parens{h^2}$.  Again, the bias we introduced in the denominator offsets the bias in the numerator.  \qed
\end{ex}

\Cref{ex:Epanechnikov at the boundary S=1} demonstrates that, unlike traditional boundary kernel estimators, the bias of our estimator is $O\parens{h^2}$ at the boundary, also.  It may seem odd that the bias in \cref{ex:Epanechnikov at the boundary S=1} is less than that in \cref{ex:Epanechnikov away from the boundary S=1} but note that the variance will be larger at the boundary and one would hence generally choose a greater bandwidth.

\section{Limit results}

\subsection{Derivatives of $L$}

We first derive limit results for the vector $\bh$ of estimates of derivatives of $L$.  Since our estimator $\bh$ is defined as the inverse of a matrix times a vector, its bias is the inverse of a matrix times a vector, also.

To simplify expressions for the asymptotic bias and variance of our estimator, we introduce the following objects which depend only on the choice of function $g$ (which in turn depends on the proximity to the boundary $z$), as well as a diagonal matrix that depends on $h$. Let
\<\label{eq:Omega}
\Omega_{js}=\Omega\parens{z} = \frac1{\parens{s-1}!}\int_{-z}^{1}-g_{j}\parens{t} t^{s-1}\dif t
= \sum_{t=0}^{s-1}  \frac{\parens{-1}^{s-t+1}\parens{s-t} j!}{\parens{j+s+1-t}! t!}  \parens{1+z}^{j+s+1-t}, \quad 
j,s=1,2,\dots,
\>
and let $\Omega,b,\Lambda$ be defined as
	\[
\Omega =	\begin{bmatrix}
	\Omega_{11} & \Omega_{12} & \cdots & \Omega_{1S} \\
	\Omega_{21} & \ddots & \ddots & \vdots \\
	\vdots & \ddots & \ddots & \Omega_{S-1,S} \\
	\Omega_{S1} & \cdots & \Omega_{S,S-1} & \Omega_{SS}
\end{bmatrix},
\qquad
b = \Omega^{-1}
\begin{bmatrix}
	\Omega_{1,S+1} \\
	\Omega_{2,S+1} \\
	\vdots \\
	\Omega_{S,S+1}
\end{bmatrix}, 
\qquad
\Lambda = \begin{bmatrix} 1 & & \\ & \ddots & \\ & & h^{S-1} \end{bmatrix}.
\]
Let further $V \in \Reals^{S\times S}$ have $\parens{j,s}$ element equal to 
\[
 \int_{-z}^1 g_j'\parens{t}g_{s}'\parens{t}\dif t = \frac{2  js \parens{1+z}^{j+s+1}}{\parens{j+s+1}\parens{j+s}\parens{j+s-1}}.
\]
We are now in a position to state our first theorem.
\begin{thm} \label{thm:beta}
Assume $L$ is $S+1$ times continuously differentiable in a neighborhood of $x$.  Let $h \to 0$ and $n h^3 \to \infty$ as $n\to\infty$.  For a vector $\tilde\beta$ defined in \cref{app:definitions},
\< \tag*{\qed}
  \Lambda \parens{\tilde\beta - \beta} = h^S\beta_{S+1}  b + o_p\parens{h^S},
  \qquad
 \sqrt{nh^3} \Lambda\parens{\bh - \tilde\beta} \convd
 N\parens[\Big]{
 0
 ,\:
 \parens{\Omega^\tr V^{-1} \Omega}^{-1} \bigm/ f\parens{x}
}.
\>	
\end{thm}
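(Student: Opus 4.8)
The plan is to express $\bh$ as the solution of the linear system \cref{eq:sample system}, replace the two sample averages appearing there by their population analogues, and then treat the deterministic bias and the stochastic sampling error separately. Write \cref{eq:sample system} as $\hat M_n\bh=-\hat c_n$, where $\hat c_n$ has $j$th entry $\tfrac1{nh^2}\sum_i g_j'\parens{(\R x_i-x)/h}$ and $\hat M_n$ has $(j,s)$ entry $\tfrac1{nh(s-1)!}\sum_i g_j\parens{(\R x_i-x)/h}(\R x_i-x)^{s-1}$. Pulling the factor $h^{s-1}$ out of column $s$ gives $\hat M_n=\tilde M_n\Lambda$ with $(\tilde M_n)_{js}=\tfrac1{nh(s-1)!}\sum_i g_j\parens{(\R x_i-x)/h}\parens{(\R x_i-x)/h}^{s-1}$, so $\Lambda\bh=-\tilde M_n^{-1}\hat c_n$. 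The substitution $y=x+ht$ gives $\Exp(\tilde M_n)_{js}=\tfrac1{(s-1)!}\int_{-z}^1 g_j(t)t^{s-1}f(x+ht)\dif t\to -f(x)\Omega_{js}$, and since the summands are bounded a weak law of large numbers for triangular arrays yields $\tilde M_n\cip -f(x)\Omega$; hence $\tilde M_n$ is invertible with probability approaching one and $\tilde M_n^{-1}\cip -\Omega^{-1}/f(x)$. I would take $\tilde\beta$ to be the solution of the population counterpart $\parens{\Exp\hat M_n}\tilde\beta=-\Exp\hat c_n$ of \cref{eq:sample system}, i.e.\ $\Lambda\tilde\beta=-\parens{\Exp\tilde M_n}^{-1}\Exp\hat c_n$, which is the vector of \cref{app:definitions}.

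For the bias, use $f'=L'f$ and integrate by parts — legitimate because $g_j(-z)=g_j(1)=0$, exactly as in the derivation of \cref{eq:population system} — to get $\Exp(\hat c_n)_j=-\int_{-z}^1 g_j(t)L'(x+ht)f(x+ht)\dif t$, while $\sparens{\parens{\Exp\hat M_n}\beta}_j=\int_{-z}^1 g_j(t)\bigl[\sum_{s=1}^S\beta_s(ht)^{s-1}/(s-1)!\bigr]f(x+ht)\dif t$. Subtracting and Taylor-expanding $L'$ about $x$ to order $S+1$, with remainder $o\parens{h^S}$ uniformly in $t\in[-z,1]$ because $L^{(S+1)}$ is uniformly continuous near $x$, isolates the leading term and leaves all other contributions (including those from $f(x+ht)=f(x)+o(1)$) at order $o\parens{h^S}$, giving $-\Exp\hat c_n=\parens{\Exp\hat M_n}\beta-\beta_{S+1}h^S f(x)\parens{\Omega_{1,S+1},\dots,\Omega_{S,S+1}}^\tr+o\parens{h^S}$, where I used $\Omega_{j,S+1}=\tfrac1{S!}\int_{-z}^1-g_j(t)t^S\dif t$. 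Pre-multiplying by $\parens{\Exp\tilde M_n}^{-1}$ and using $\Exp\hat M_n=\parens{\Exp\tilde M_n}\Lambda$ together with $\parens{\Exp\tilde M_n}^{-1}\to -\Omega^{-1}/f(x)$ yields $\Lambda\tilde\beta=\Lambda\beta+h^S\beta_{S+1}\Omega^{-1}\parens{\Omega_{1,S+1},\dots,\Omega_{S,S+1}}^\tr+o\parens{h^S}=\Lambda\beta+h^S\beta_{S+1}b+o\parens{h^S}$, which is the first assertion.

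For the limiting distribution I would use the decomposition
\[
\Lambda(\bh-\tilde\beta)=-\tilde M_n^{-1}\parens{\hat c_n-\Exp\hat c_n}+\Bigl(\parens{\Exp\tilde M_n}^{-1}-\tilde M_n^{-1}\Bigr)\Exp\hat c_n.
\]
Every entry of $\tilde M_n$ has variance $O\parens{(nh)^{-1}}$ and $\Exp\hat c_n=O(1)$, so the second term is $O_p\parens{(nh)^{-1/2}}$; as $h\to0$ this is $o_p\parens{(nh^3)^{-1/2}}$ and so disappears after scaling by $\sqrt{nh^3}$. In the first term $\hat c_n-\Exp\hat c_n$ is a sum of $n$ i.i.d.\ mean-zero terms whose $\sqrt{nh^3}$-rescaling satisfies $nh^3\Cov\parens{(\hat c_n)_j,(\hat c_n)_s}\to f(x)\int_{-z}^1 g_j'(t)g_s'(t)\dif t=f(x)V_{js}$, and the Lindeberg condition holds trivially since each normalized summand is $O\parens{(nh)^{-1/2}}$ with $nh\to\infty$; hence by the Cram\'er--Wold device $\sqrt{nh^3}\parens{\hat c_n-\Exp\hat c_n}\convd N\parens{0,f(x)V}$. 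Combining with $\tilde M_n^{-1}\cip -\Omega^{-1}/f(x)$ via Slutsky gives $\sqrt{nh^3}\,\Lambda(\bh-\tilde\beta)\convd N\parens[\big]{0,\,\Omega^{-1}V\parens{\Omega^\tr}^{-1}/f(x)}$, and since $\Omega$ is square and invertible this covariance equals $\parens{\Omega^\tr V^{-1}\Omega}^{-1}/f(x)$, as claimed.

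I expect the main work to be in the bias step — forcing the Taylor remainder for $L'$ to be uniformly $o\parens{h^S}$ on $[-z,1]$ and checking that every correction (from that remainder, from $f(x+ht)=f(x)+o(1)$, and from replacing $\Exp\tilde M_n$ by its limit $-f(x)\Omega$) is of strictly smaller order than $h^S$ — together with the bookkeeping needed to see that the matrix-estimation term $\bigl(\parens{\Exp\tilde M_n}^{-1}-\tilde M_n^{-1}\bigr)\Exp\hat c_n$ is $o_p\parens{(nh^3)^{-1/2}}$, which is the one place the bandwidth condition $nh^3\to\infty$ (with $h\to0$) is used; the remaining ingredients are standard kernel-estimator limit theory.
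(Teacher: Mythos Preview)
Your proposal is correct and follows essentially the same route as the paper: your $\tilde M_n$ and $\hat c_n$ are, up to sign, the paper's $\hat B_{hx}$ and $\hat a_{hx}$, and your population system for $\tilde\beta$ coincides with the paper's definition $\tilde\beta=R_{hx}^{-1}\ell_{hx}=\Lambda^{-1}B_{hx}^{-1}a_{hx}$. The bias argument (integration by parts plus Taylor expansion of $L'$, then replacing $f(x+th)$ by $f(x)$) is exactly Lemmas~2--3, and your CLT for $\sqrt{nh^3}(\hat c_n-\Exp\hat c_n)$ with Slutsky is the content of Lemma~5; the only cosmetic difference is that you use a two-term decomposition with the random $\tilde M_n^{-1}$ in front and invoke Slutsky, whereas the paper adds and subtracts $B_{hx}^{-1}$ to get a three-term expansion with a deterministic leading coefficient --- both yield the same $O_p(h)$ bound on the matrix-error term and the same limit.
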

The ``in a neighborhood'' condition comes from the fact that we specifically allow $x=zh$.

\begin{ex}
 For $S=1$ the bias and variance expressions of $\bh_1$ simplify to 
 \(
\beta_2 h \parens{1-z}/2
 \)
 and
\(
 12 / \cparens{ f\parens{x} \parens{1+z}^3},
\)	
respectively. The interior case ($z=1$) is more favorable than the boundary case ($z=0$), as expected. \qed
\end{ex}

\begin{ex}
For $S=2$ the bias and variance expressions for $\bh_1$ are
\(
 -\beta_3 h^2 \parens{ 1-3z+z^2} / 10
\)
and
\(
 48 \parens{4-7z+4z^2}/\cparens{f\parens{x}\parens{1+z}^5},
\)
which is again more favorable in the interior than at the boundary. \qed
\end{ex}	
	
\subsection{Density}

We now continue with the results for $\fhat\parens{x}$.

Let  $c_{msz} = \int_{-z}^1 m_z\parens{t} t^s \dif t \mathrel/ s!$ and let $c_{mz}$ be a vector with elements $c_{m1z},\dots,c_{mSz}$.   Let further
\(
\Omega_z\parens{t} = m_z\parens{t} - c_{mz}^\tr \Omega^{-1} g'\parens{t},
\)
and define
\(
\sV = f\parens{x}\int_{-z}^1 \Omega_z^2\parens{t} \dif t
\)
 and 
 \(
 \sB =  f\parens{x}\beta_{S+1}\Xi_{f}\int_{-z}^{1}\omega_{z}\parens{t}t^{S+1}\dif t \mathrel/\parens{S+1}!= f\parens{x}\beta_{S+1} \Xi_f \parens[\big]{ c_{m,S+1,z} -  c_{mz}^\tr b} ,
\)
for $\Xi_f$ a constant defined in the statement of \cref{thm:f}. Because $m_{z}$ integrates to one and $\int_{-z}^{1} g'\parens{t}\dif t = 0$ and $\int_{-z}^{1}\Omega^{-1}g'(t)t^{s}\dif t \mathrel/ s! =\Omega^{-1}\Omega_{\cdot s}$ is the $s$--th standard basis vector in $\Reals^{S}$, the function $\omega_{z}$ is a kernel of order $S+1$ or higher. To be clear, $\omega_{z}$ is not used to compute the density estimate; rather, it is a convenient object that arises in the asymptotic theory.

\begin{thm} \label{thm:f} 
	Assume $L$ is $S+1$ times continuously differentiable in a neighborhood of $x$, that $f\parens{x}>0$, and that $0 
	\leq \Xi_f^2 = \lim_{n\to\infty} nh^{2S+3}<\infty$.  Then
\( 
 \sqrt{nh} \cparens{\fhat\parens{x} - f\parens{x}} 
 \convd
 N\parens{\sB,\sV}.
\) \qed	
\end{thm}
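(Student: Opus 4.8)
The plan is to write $\fhat\parens{x} = \fhat_m\parens{x}/D_n$ with $D_n = \int_{-z}^1 m_z\parens{t}\exp\parens[\big]{\sum_{s=1}^S \bh_s t^s h^s/s!}\dif t$ the denominator in \cref{eq:fhat def}, note that $D_n \cip 1$ (its leading term is $\int_{-z}^1 m_z = 1$), and reduce the claim, by Slutsky applied to $\sqrt{nh}\parens{\fhat\parens{x}-f\parens{x}} = \sqrt{nh}\cparens{\fhat_m\parens{x}-f\parens{x}D_n}/D_n$, to showing $\sqrt{nh}\cparens{\fhat_m\parens{x}-f\parens{x}D_n}\convd N\parens{\sB,\sV}$. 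First I would collect the orders supplied by \cref{thm:beta}: decomposing $\Lambda\parens{\bh-\beta} = \Lambda\parens{\bh-\bt}+\Lambda\parens{\bt-\beta}$, the first summand is $O_p\parens[\big]{(nh^3)^{-1/2}}$ and the second equals $h^S\beta_{S+1}b+o_p\parens{h^S}$; since the bandwidth condition gives $h^{S+1}\sqrt{nh} = \sqrt{nh^{2S+3}} = O\parens{1}$, each coordinate satisfies $h^s\parens{\bh_s-\beta_s} = h\,\sparens{\Lambda\parens{\bh-\beta}}_s = O_p\parens[\big]{(nh)^{-1/2}}$.

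Next I would expand both parts of $\fhat_m\parens{x}-f\parens{x}D_n$ to this common order. Factoring $\exp\parens[\big]{\sum_s\bh_s t^sh^s/s!} = \exp\parens[\big]{\sum_s\beta_s t^sh^s/s!}\cdot\exp\parens[\big]{\sum_s\parens{\bh_s-\beta_s}t^sh^s/s!}$ and using the coordinatewise bound above, the last exponential is $1+\sum_s\parens{\bh_s-\beta_s}t^sh^s/s!+O_p\parens[\big]{(nh)^{-1}}$ uniformly for $t\in\sparens{-z,1}$, so $D_n = D_n^\circ + h\,c_{mz}^\tr\Lambda\parens{\bh-\beta}+o_p\parens[\big]{(nh)^{-1/2}}$ with $D_n^\circ = \int_{-z}^1 m_z\parens{t}\exp\parens[\big]{\sum_{s=1}^S\beta_s t^sh^s/s!}\dif t$. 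A Taylor expansion of $L$ to order $S+1$ gives $f\parens{x+th} = f\parens{x}\exp\parens[\big]{\sum_{s=1}^{S+1}\beta_s t^sh^s/s!}+o\parens{h^{S+1}}$ uniformly in $t$, whence $\Exp\fhat_m\parens{x} = \int_{-z}^1 m_z\parens{t}f\parens{x+th}\dif t = f\parens{x}D_n^\circ + f\parens{x}\beta_{S+1}h^{S+1}c_{m,S+1,z}+o\parens{h^{S+1}}$. Combining these, together with $\Lambda\parens{\bt-\beta} = h^S\beta_{S+1}b+o_p\parens{h^S}$ and the identity $\Omega^{-1}\int_{-z}^1 g'\parens{t}t^{S+1}\dif t/\parens{S+1}! = b$ (integration by parts, $g\parens{-z}=g\parens{1}=0$), I obtain the decomposition
\begin{multline*}
\fhat_m\parens{x}-f\parens{x}D_n = \sparens{\fhat_m\parens{x}-\Exp\fhat_m\parens{x}} - f\parens{x}\,h\,c_{mz}^\tr\Lambda\parens{\bh-\bt}\\
{}+ f\parens{x}\beta_{S+1}h^{S+1}\parens{c_{m,S+1,z}-c_{mz}^\tr b} + o_p\parens[\big]{(nh)^{-1/2}}.
\end{multline*}

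It then remains to identify the limit. Scaled by $\sqrt{nh}$, the deterministic term tends to $\Xi_f f\parens{x}\beta_{S+1}\parens{c_{m,S+1,z}-c_{mz}^\tr b} = \sB$. For the stochastic terms, the system \cref{eq:sample system} is linear in $\bh$, so linearizing it around its limiting coefficient matrix $f\parens{x}\Omega\Lambda$ --- as in the proof of \cref{thm:beta} --- gives $\Lambda\parens{\bh-\bt} = \cparens{f\parens{x}nh^2}^{-1}\Omega^{-1}\sum_{i=1}^n\cparens{g'\parens{\tfrac{\R x_i-x}{h}}-\Exp g'\parens{\tfrac{\R x_1-x}{h}}}+o_p\parens[\big]{(nh^3)^{-1/2}}$; substituting, the two factors $f\parens{x}$ cancel and the stochastic part of $\fhat_m\parens{x}-f\parens{x}D_n$ becomes $\tfrac{1}{nh}\sum_{i=1}^n\cparens{\omega_z\parens{\tfrac{\R x_i-x}{h}}-\Exp\omega_z\parens{\tfrac{\R x_1-x}{h}}}+o_p\parens[\big]{(nh)^{-1/2}}$, where $\omega_z = m_z-c_{mz}^\tr\Omega^{-1}g'$ is exactly the $\Omega_z$ of the statement. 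A Lyapunov CLT for this triangular array --- valid since $g'$ and $m_z$ are bounded with compact support, $h^{-1}\Exp\omega_z^2\parens{\tfrac{\R x_1-x}{h}}\to f\parens{x}\int_{-z}^1\omega_z^2 = \sV$, and $h^{-1}\parens[\big]{\Exp\omega_z\parens{\tfrac{\R x_1-x}{h}}}^2 = O\parens{h}\to0$ --- yields $\sqrt{nh}\cparens{\fhat_m\parens{x}-f\parens{x}D_n}\convd N\parens{\sB,\sV}$, and dividing by $D_n\cip1$ finishes the proof.

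I expect the main obstacle to be the bookkeeping that forces the three error sources --- the sampling noise of $\fhat_m$, the sampling noise of $\bh$ entering through $D_n$, and the deterministic $O\parens{h^{S+1}}$ bias --- to land at the single rate $(nh)^{-1/2}$, and that makes the denominator's stochastic fluctuation combine with the numerator to produce precisely the order-$\parens{S+1}$ kernel $\omega_z$, so that the leading biases cancel; this cancellation rests on the stated facts that $\int_{-z}^1 g' = 0$ and that $\int_{-z}^1\Omega^{-1}g'\parens{t}t^s\dif t/s!$ is the $s$th standard basis vector in $\Reals^S$. A secondary technical point is the boundary regime $x = zh$ with $z$ fixed, where ``neighborhood of $x$'' must be read as a shrinking neighborhood of $0$; the expansions still apply because only $y$ within $O\parens{h}$ of $x$ enter and the remainders are $o\parens{\cdot}$ uniformly there.
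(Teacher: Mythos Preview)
Your proposal is correct and follows essentially the same route as the paper: the same Slutsky reduction to $\sqrt{nh}\cparens{\fhat_m\parens{x}-f\parens{x}D_n}$, the same linearization of $D_n$ in $\bh-\beta$ (the paper's \cref{lem:exponential integrals approximation}), the same combination of the numerator's sampling noise with $-f\parens{x}h\,c_{mz}^\tr\Lambda\parens{\bh-\bt}$ to produce the kernel $\omega_z$ (the paper's \cref{lem:limit distribution dominant term}), and the same bias identification $\sB = f\parens{x}\beta_{S+1}\Xi_f\parens{c_{m,S+1,z}-c_{mz}^\tr b}$.

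The one noteworthy difference is in how you extract the bias. The paper expands $f^{\parens{s}}\parens{x}$ via Fa\`a di Bruno and complete Bell polynomials $\sP_s\parens{\beta_1,\dots,\beta_s}$ separately for $\Exp\fhat_m$ (\cref{lem:L num bias}) and for the deterministic denominator (\cref{lem:L den bias}), then observes that all terms up through order $S$ cancel because $\sP_s\parens{\beta_1,\dots,\beta_s}=\sP_s\parens{\beta_1^*,\dots,\beta_s^*}$ for $s\leq S$. You instead write $f\parens{x+th}=f\parens{x}\exp\parens[\big]{\sum_{s=1}^{S+1}\beta_s t^sh^s/s!}+o\parens{h^{S+1}}$ and factor out $D_n^\circ$ directly, so the cancellation is immediate and the Bell machinery is bypassed. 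This is a modest but genuine simplification: the paper's route makes the structure of each bias component explicit (useful for the worked examples), while yours is shorter and makes the cancellation mechanism transparent.
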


The asymptotic bias of our estimator is zero in some instances. For example, if $S=2$ and $z=1$ then the asymptotic bias is zero whenever $m$ is a symmetric kernel function; this is natural since this is effectively equivalent to choosing a higher order kernel $\omega_{z}$, albeit that unlike higher order kernel density estimates, our estimates cannot be negative.

The following two examples derive the $\Omega_z$ functions for the case in which both $m_z$ is a uniform and $x$ is at the boundary and the case in which $m_z$ is a truncated Epanechnikov and $x$ is anywhere.

\begin{ex}
\label{ex: linear vs quadratic uniform}
Suppose that $m_z$ is a uniform and $x=z=0$.
If $S=1$ then $\Omega_0\parens{t} = \parens{4-6t} \one \parens{0\leq t\leq 1}$ and $\omega_1\parens{t} = \frac{1}{2}\one\parens{|t|\leq 1}$.
If instead $S=2$ then $\Omega_0\parens{t} = \parens{9-36t+30t^2} \one\parens{0\leq t\leq 1}$ and $\omega_{1}\parens{t} = \frac{3}{8}\parens{3-5 t^{2}}\one\parens{|t|\leq 1}$. \qed
\end{ex}

{
\begin{ex}
	If $m_z$ is a truncated Epanechnikov and $S=1$ then $m_z\parens{t} = \nu\parens{1-t^2} \one\parens{-z\leq t\leq 1}$ with
	$\nu= 3 / \cparens{ \parens{1+z}^2 \parens{2-z}}$, such that
	$c_{m1z} = \nu \parens{1-z^2}^2 /4 = 3 \parens{1-z}^2 / \cparens{4\parens{2-z}}$,
	$c_{m2z} = \parens{2-4z+6z^2-3z^3} / \cparens{ 10 \parens{2-z}}$,
	and $b = \parens{1-z}/2$, which produces
	\[
	\sB
	=
	\beta_2 \Xi f\parens{x}\frac{3z^3+29z-7-21z^2}{40\parens{2-z}},
	\]
	where the ratio equals $1/10$ for $z=1$ and $-7/80$ for $z=0$.   To get $\sV_f$ note that
	\(
	\Omega_z\parens{t} =  \nu \cparens{ 2\parens{1+z} \parens{1-t^2} - 6 \parens{1-z}^2t + 3 \parens{1-z}^3} \mathbin/ \cparens{{2\parens{1+z}}},
	\)
	which produces
	\[
	\sV = f\parens{x} \frac{108-180 \parens{1+z} + 120 \parens{1+z}^2 - 36 \parens{1+z}^3 + 4.05 \parens{1+z}^4}{\parens{1+z}^3\parens{2-z}^2},
	\]
	which equals $0.6 f\parens{x}$ for $z=1$ and $4.01 f\parens{x}$ for $z=0$.  
	\qed
\end{ex}
}

As the above two examples demonstrate, deriving the asymptotic bias and variance for generic $z$ can be a messy but straightforward exercise.
\section{Asymptotic comparisons}
\label{section:optimalg}
In this section, we explore the optimal $(g,m_{z})$ in the local linear case $S=1$ and compare our optimized estimator with existing methods. We show that the above choice of $g$ and $m_{z}$ achieve the same asymptotic variance as an optimal RP estimator in the interior ($z=1$), while their respective biases cannot be compared in general. We then consider the optimal choice of $g$ and $m_{z}$ at the boundary ($z=0$), where we show that the truncated Epanechnikov $m_{z}$ paired with $g(t) = (t+z)(t-1)^{2}$ attains the same variance as an optimal boundary kernel \citep{Karunamuni1998endpoints}, though the biases are again incomparable because our estimator's bias is a function of $f(x)L''(x)$ rather than $f''(x)$ as in the case of RP estimators. We are, however, able to compare the asymptotic performance of our estimation method with a local--likelihood based estimator. We show that our method with the cubic choice $g\parens{t}=\parens{t+z}\parens{1-t}^{2}$ attains the same asymptotic mean squared error (AMSE) in the interior and is more efficient at the boundary than the estimator in \citet{Loader1996likelihood} with an Epanechnikov kernel.

\subsection{Optimal choice of $g$ and $m_{z}$ in the local linear case}
Letting $\chi^{5} = \lim_{n\to\infty} h^{5}nf\parens{x}L''\parens{x}^{2}$, the AMSE of $\hat f\parens{x}$ only depends on $h$ and $\parens{g,m}$ through a multiplicative constant that can be written in terms of $\chi$ and the second--order kernel $\omega_{z}$:
\<
\label{eq:amse}
\sparens[\bigg]{\chi^{4} \parens[\bigg]{\int_{-z}^{1}\omega_{z}\parens{t}t^{2}/2 \dif t}^{2}+ \chi^{-1} \int_{-z}^{1}\omega_{z}\parens{t}^{2}\dif t}\sparens[\bigg]{f\parens{x}^{6/5}L''\parens{x}^{2/5}n^{-4/5}}\,.
\>
Unlike the typical approach to comparing kernels in kernel density estimation, in which one considers the optimal choice of $\chi$ as a function of $\omega_{z}$, we treat $\chi$ as fixed and seek to minimize the asymptotic MSE over $\omega_{z}$ instead of $(\omega_{z},\chi)$. We do so for two reasons. First, for values of $x$ near but not at the boundary, the function $\omega_{z}$ depends on $h$ through $z$, with the result that the first--order condition for optimality of the bandwidth is generally insufficient for the global minimum of the AMSE as a function of $h$. Second, many combinations of $g$ and $m_{z}$ yield a function $\omega_{z}$ that achieves zero asymptotic bias, which implies that there does not exist a finite optimal $\chi$.\footnote{One could assume an additional derivative of $f$, in which case the optimal bandwidth sequence would be proportional to $n^{-1/7}$ and one might also consider using a quadratic approximation ($S=2$).} 

In light of the apparent similarity between \cref{eq:amse} and the corresponding expression for the AMSE of RP estimators, one might expect $\omega_{1}=3(1-t^{2})/4$ to be optimal using our method for the same reason that the Epanechnikov kernel is an optimal second--order kernel for use in RP estimation. Although we will eventually recommend $\omega_{1}=3\parens{1-t^{2}}/4$ for a particular value of $\chi$, our reasoning is different in two important ways. First, we do not require $\omega_{z}\geq 0$ as in \cite{Epanechnikov1969}, because this restriction is not necessary to guarantee positive density estimates. Nor do we require $\omega_{z}\parens{1} = 0$ and $\omega_{1}\parens{-1}=0$ as in \cite{Muller1984} because these restrictions are not needed to ensure the density estimate is continuous in $x$. We place these restrictions on $m_{z}$, instead. Second, these constraints on $m_{z}$ and the maintained assumptions on $g$ are not binding if one minimizes the AMSE over $\parens{\omega_{z},\chi}$; for instance, $m_{1}\parens{t} = 3\parens{1-t}^{2}\parens{1+t}/4$ and $g\parens{t} = \parens{t+z}\parens{1-t}\parens{t^{2} + 2t - 1}$ yields $\omega_{1}\parens{t}=3\parens{3-5 t}^{2}/8$, which is the fourth--order kernel that minimizes the variance conditional on achieving zero bias and minimizes the AMSE as $\chi$ tends to infinity. Hence, when translated into the context of our estimator, the typical constraints on second--order kernels in RP estimation do not yield an interior solution to the optimal choice of $\parens{\omega_{z},\chi}$.

Thus, we seek a pair $\parens{g, m_{z}}$ with $m_{z}\geq 0$ that yields the optimal $\omega_{z}$ given a particular $\chi$, though we do allow $\chi$ to depend on $z$ so that the bandwidth sequence may be larger at the boundary than in the interior. The necessary conditions for the minimizer of the AMSE in \cref{eq:amse} imply that the optimal $\omega_{z}$ is quadratic, which generally implies a quadratic $m_{z}$ and a cubic $g$. The minimizer is not unique, however, because many pairs will produce the same $\omega_{z}$ and therefore the same asymptotic distribution. In fact, if the ``first moment'' of $m_{z}$ is zero, the AMSE in $\hat f$ does not depend on the choice of $g$ because $c_{m1}=0$ and $\omega_{z}=m_{z}$. A case such as this arises, for example, when $\chi = 15^{1/5}$, $z=1$, and one minimizes the AMSE by letting $m_{z}$ be the Epanechnikov kernel.

We focus on the choice of $\chi=15^{1/5}$ for two reasons. First, the constants in the expressions for the limiting bias and variance of $\hat f$ are the same as those found in the asymptotic bias and variance of the RP estimator for $f$ using the Epanechnikov kernel. This choice of $\chi$ and $m_{z}$ therefore provides a benchmark for comparison with an optimal RP estimator, because, for a fixed bandwidth $h$, the magnitude of our bias to the RP density estimator's bias depends on the ratio of $f(x)L''(x)$ to $f''(x)$ but our asymptotic variances are the same.  And, second, the higher--order bias terms may be non-negligible when $\chi$ is too large, which could worsen the asymptotic approximation to the MSE in finite samples. Lacking a useful definition of ``too large,'' we default to a familiar choice.

Though we treat $\chi=15^{1/5}$ as fixed, this value is the optimal bandwidth scaling factor to use with the Epanechnikov kernel. We note, however, that this does not imply that the Epanechnikov $m_{z}$ and $\chi=15^{1/5}$ attain the minimum over all pairs $(m_{z},\chi)$. One can achieve a smaller AMSE at $z=1$ given a larger bandwidth by using a cubic $m_{z}$ and quartic $g$.\footnote{No symmetric nonnegative $m_{1}$ can improve on the AMSE of the Epanechnikov kernel because $g$ does not affect the limiting distribution. An asymmetric kernel---e.g.\ cubic $m_{1}$ with $m_{1}\parens{-1}=m_{1}\parens{1}=0$---and carefully selected $g$ is needed in order to improve on the AMSE at $z=1$.} Indeed, the above choice of $\omega_{1}(t) = 3\parens{3-5t^{2}}/8$ is an extreme example in which the asymptotic bias is zero. Its AMSE is smaller whenever $\chi > 3\times 15^{1/5}/2$, and its asymptotic variance is smaller as long as $\chi > 15^{6/5}\mathrel/8$, i.e.\ 1.875 times larger than the bandwidth used with the Epanechnikov kernel. Thus, even though our estimator's bias is generally not comparable to the bias of estimators that employ a local constant (RP) or local polynomial \citep{Lejeune1992smooth,Cattaneo2019simple} approximation to $f$ or $F$, the asymptotically unbiased version of our estimator always has a smaller AMSE than these alternatives if the researcher is willing to use a large enough bandwidth.

Since the above criterion does not inform the optimal choice of $g$ for $z=1$ and $\chi=15^{1/5}$, one can choose $g\parens{t} = 1-t^{2}$ to minimize the AMSE in $\beta_{1}$. 

At the boundary ($z=0$), it is perhaps reasonable to use a bandwidth that is twice as large as that used at $z=1$, i.e. $\chi = 2 \times 15^{1/5}$. In this case, the optimal AMSE is attained with the truncated Epanechnikov and $g(t) = t (1-t)^{2}$. Interestingly, this choice implies that $\omega_{0}\parens{t} = 6\parens{1-2t}\parens{1-t}$, which is the optimal boundary kernel derived by \citet{Karunamuni1998endpoints}. Indeed, the constants in our bias and variance expressions are the same as the kernel--related constants in the limiting distribution of the RP estimator for $f\parens{0}$ given by $\sum_{i=1}^{n}\omega_{0}\parens{x_{i}/h}/\parens{nh}$, indicating that the asymptotic variances of the two estimators are the same for a fixed bandwidth. In contrast to this estimator, however, our proposed estimator is always positive.

As in the interior case ($z=1$), this choice of $m_{z}$ and $g$ is not optimal over all triples $\parens{g,m_{z},\chi}$. One can obtain zero asymptotic bias and a smaller asymptotic variance using the truncated Epanechnikov $m_{z}$, $g(t) = \parens{t+z}\parens{t-1}\parens{t-5/7}$, and any finite $\chi>15^{6/5}/4$.\footnote{These inputs yield the third--order kernel boundary $\omega_{0}\parens{t}= 9 - 36 t + 30 t^{2}$ that minimizes $\int_{0}^{1} \omega_{0}\parens{t}^{2}\dif t$} But we caution that this relatively large bandwidth---at least 3.75 times larger than the optimal bandwidth in the interior---can limit the usefulness of our asymptotic approximation to the bias in finite samples.

Finally, we note that the asymptotically unbiased $\omega_{0}\parens{t}$ and $\omega_{1}\parens{t}$ are the same as those derived for the $S=2$ case in \cref{ex: linear vs quadratic uniform}. Hence, the asymptotically unbiased local linear estimator has the same limiting distribution as an undersmoothed local quadratic estimator, i.e.\ a local quadratic estimator using a bandwidth sequence of order $n^{-1/5}$ instead of $n^{-1/7}$, though they are not numerically equivalent.

\subsection{Relative AMSE}
The AMSE of the local--likelihood estimator for $f\parens{x}$ in \citet{Loader1996likelihood} can be written in a form similar to \cref{eq:amse}. The relative AMSE of our proposed estimator and the local likelihood estimator using an optimal kernel is then given by the ratio of the multiplicative constants that scale $f\parens{x}^{6/5}L''\parens{x}^{2/5}n^{4/5}$. At $z=1$, if one uses an optimal bandwidth sequence and the Epanechnikov kernel with the local--likelihood based estimator and $\chi=15^{1/5}$ with our optimal estimator, the relative AMSE of our estimators is one. In fact, the limiting distributions are identical. At $z=0$, the optimal kernel to use with the local likelihood estimator is triangular, i.e. $k\parens{t}=\parens{1-|t|}\mathbb1\cparens{|t|\geq 1}$, which yields the same asymptotic bias and variance as our estimator using $\chi = 2\times15^{1/5}$, the truncated Epanechnikov, and $g\parens{t}=\parens{t+z}\parens{1-t}^{2}$.

We should expect our estimator with $g\parens{t}=\parens{t+z}\parens{1-t}^{2}$ and the local--likelihood estimator to perform similarly in finite samples. Thus, our density estimator's computational ease is its more salient advantage over the local--likelihood based approach, given these inputs. Of course, one could make an alternative choice of $g$ and increase the bandwidth to widen the gap in AMSE at the possible expense of a larger finite sample bias. Greater reductions in the AMSE require larger bandwidths and risk worse finite sample performance.

\subsection{Optimal inputs with polynomial approximations of higher order}
For $S>1$, the optimal $\parens{g,m_{z}}$ can again be reformulated as the optimal choice of a higher order kernel $\omega_{z}$. Unlike in RP estimation, however, the higher order kernel does not necessarily entail the possibility of negative density estimates, since one can achieve a higher order $\omega_{z}$ kernel using a nonnegative $m_{z}$ and a suitable $g$. Moreover, as in the linear case, the restrictions $\omega_{z}\parens{1}=0$ and $\omega_{1}\parens{-1} = 0$ are not necessary in order for the estimated density to be continuous. Without these restrictions we do not obtain an interior solution to the optimal combination of bandwidth and $\omega_{z}$. We would therefore fix the bandwidth when we optimize the AMSE over $\parens{g,m_{z}}$ in the higher order case, as well.


\section{Applications}
\label{section:applications}

\subsection{Treatment effects}

It is well--known \citep[see e.g.][]{hirano2003efficient} that under an unconfoundedness assumption the average treatment effect can be expressed as
\[
\Exp\parens[\Big]{\frac{ \R y \R t}{p\parens{\R x}} - \frac{ \R y \parens{1-\R t}}{1-p\parens{\R x}}},
\]
where $p$ is the propensity score, $\R y$ the outcome variable, $\R x$ a vector of regressors, and $\R t$ a binary treatment variable.  Let $p_1= \Exp \R t$ be the unconditional treatment probability.  Let further $f_1$ denote the regressor density function conditional on treatment and $f_0$ the density conditional on nontreatment.  Then $f\parens{x} = f_1\parens{x} p_1 + f_0\parens{x} \parens{1-p_1}$, which produces
\[
\frac1{p\parens{x}} = 1 + \frac{f_0\parens{x}}{f_1\parens{x}} \frac{1-p_1}{p_1}, 
\qquad
\frac1{1-p\parens{x}} = 1+ \frac{f_1\parens{x}}{f_0\parens{x}} \frac{p_1}{1-p_1},
\]
such that the reciprocals of the propensity scores only depend on the ratios of the densities and the unconditional choice probabilities. In practice, $p\parens{x}$ is often estimated using a logistic functional form and possibly a series expansion in $x$,\footnote{This casual observation is supported by the fact that the built--in propensity score matching estimator in Stata defaults to the logit model.}  which implies the logarithm of the odds ratio is a polynomial in $x$. Our log--polynomial approximation to $f_{1}$ and $f_{0}$ similarly imply the odds ratio is log--polynomial. The data $\R x$ will not typically be scalar--valued, so one could for instance use a linear index of regressors instead of the regressors themselves; see \cref{sec:ks} for an example of how one might estimate $p\parens{x}= p^*\parens{x^\tr\theta_0}$. In any case, our approach is a natural local extension to the logit series estimator for $p\parens{x}$.

\subsection{Auctions}

In first--price, sealed--bid procurement models with independent private values, it is well--known \citep[see e.g.][]{Guerre2000} that the inverse bid function is of the form $b - \bar F\parens{b} / f\parens{b}$, where $\bar F,f$ are the survivor and density functions of the minimum rival bid.  Since the support of the bid distribution is assumed to have a lower bound in this literature (costs cannot be less than zero and hence neither are bids), boundary issues are a serious concern.

So let $Ψ\parens{y} =  \bar F\parens{y} / f\parens{y}$ be the object of estimation.    One way of estimating $Ψ$ is to estimate $\bar F,f$ separately where $f$ is estimated using the machinery in the main part of this paper and $\bar F$ is estimated by the empirical survivor function.  This estimator has all the features of the estimator discussed earlier in the paper.  In particular, if the underlying cost distribution is approximately an exponential then so is the bid distribution and our estimator could be expected to work especially well.

The above approach is not specific to auction models.  Indeed, consider the hazard function $H\parens{x} = f\parens{x} / \bar F\parens{x}$. $H$ can also be estimated using the machinery developed in our paper.

\subsection{Semiparametric maximum likelihood}
\label{sec:ks}
There are many examples of semiparametric maximum likelihood estimators.  Here, we only consider a classical ones, namely the \citet{klein1993efficient} estimator of the coefficients in a semiparametric binary response model, which maximizes
\[
\sum_{i=1}^n \sparens[\big]{ \R y_i \log \phat\parens{\R x_i^\tr \theta} + \parens{1-\R y_i} \log \cparens{ 1- \phat\parens{\R x_i^\tr \theta}}},
\]
where $\phat$ is an estimator of the choice probability.  Klein and Spady apply techniques to ensure that the estimates $\phat$ are positive and less than one, including trimming and adding a sample--size--dependent constant.  Our method could be helpful since 
\(
p\parens{t} = \Pr\condr{\R y_1=1 \given \R x_1^\tr\theta=t} = f_1\parens{t} p_1 / f\parens{t},
\)
where $f_j$ is the density of the linear index for observations with $\R y_i=j$ and $p_1$ is the unconditional choice probability.  Since $f\parens{t} = p_1 f_1\parens{t} + \parens{1-p_0} f_0\parens{t}$, the infeasible contribution to the loglikelihood could be written as
\[
\R y_i \log \frac{f_1\parens{\R x_i^\tr \theta}}{f_0\parens{\R x_i^\tr\theta}} 
-
\log \parens[\bigg]{ p_1\frac{f_1\parens{\R x_i^\tr\theta}}{f_0\parens{\R x_i^\tr\theta}} + \parens{1-p_1}} + \text{constant},
\]
such that it is only the ratio of $f_1/f_0$ that matters.  Obtaining conditions under which our estimator obtains the semiparametric efficiency bound, like the Klein and Spady estimator does, are well beyond the scope of this paper.

\subsection{Regression discontinuity design}

One context in which the behavior of estimates near or at the boundary is of special importance is that of regression discontinuity design. For instance, \citet{Cattaneo2019simple} provide a test of continuity of the density function at the boundary using a boundary density estimator that is similar to the estimator in \citet{Lejeune1992smooth} in that it is based on a quadratic expansion of the distribution function.  Compared to that approach, our method requires that the density be nonzero at the boundary, which is a requirement for the regression discontinuity framework in any case.  The bottom line is that our method will work better if the log density is approximately a low order polynomial near the boundary and theirs if the density itself is approximately a low order polynomial.  This is borne out by our simulation results.

\subsection{Other boundary--correction methods}

The boundary correction method of \citet{Karunamuni2008some} requires a well--behaved estimate of $L'\parens{0}$, which is exactly what our method provides.

\section{Simulations}
\label{section:sims}

The following simulation exercise compares the performance of our estimator for the density and its derivatives near the boundary with alternatives that also employ local polynomial approximations \citep{Lejeune1992smooth, Cattaneo2019simple,Loader1996likelihood} and the generalized reflection method, which also estimates the derivative of the log--density near the boundary to remove the boundary effects of the RP estimator \citep{Karunamuni2005boundary,Karunamuni2008some}. For the local polynomial estimators, we use a local linear approximation to the density or log--density, depending on the method.\footnote{This corresponds to a local--quadratic polynomial approximation to the distribution function in \cite{Lejeune1992smooth} and \cite{Cattaneo2019simple}.}

We simulate 2000 i.i.d.\ samples of size $n=500$ and estimate $f$ at points within two bandwidths of the boundary in order to compare the estimators away from, near, and at the boundary.
The random variables are drawn from each of four parametric distributions whose densities exhibit varying behaviors near their left boundary $x=0$. The first is a beta distribution rescaled to take support on $[0,5]$ (so that right boundary is sufficiently far away from zero) with density $f_{1}\parens{x} = \theta \parens{1-x/5}^{\theta-1}/5$, which is in fact polynomial in $x$ for integer values of $\theta$, which might favor CJM, although that is not reflected in our simulations if  $\theta>3$. The second design is a normal distribution with a mean of $\theta/2$ and variance of one, truncated at zero. This density is log--quadratic, which should favor our method and Loader's. The third and fourth designs are $f_{3}\parens{x} = \parens{e^{-x} + \theta x e^{-x}}/\parens{1+\theta}$ and $f_{4}\parens{x} = \parens{e^{-x} + \theta x^{2}e^{-x}}/\parens{1+2\theta}$.

For each simulation design, we estimate $f$ and its derivative using our approach with $g_{j}\parens{t}=\parens{t+z}^{j}\parens{t-1}$, $g_{j}\parens{t}=\parens{t+z}^{j}\parens{1-t}^{2}$, and $g_{j}\parens{t}=\parens{t+z}^{j}\parens{t-1}\parens{t-5/7}$ (PS$_{1}$, PS$_{2}$, PS$_{3}$), Loader's local likelihood estimator (Loader), a local polynomial regression of the empirical CDF (LS--CJM), and a generalized reflection estimator (KZ). Wherever a kernel is required, we use the Epanechnikov kernel $k(u) = 3 \parens{1- u^{2}} /4$ or a truncated version thereof. This choice is not optimal at the boundary for Loader's estimator, but the efficiency loss is quite small.\footnote{The relative efficiency of Loader's estimator using the triangular and Epanechnikov kernels at the boundary is about 1.008, meaning the Epanechnikov kernel requires a sample size 1.008 times larger to achieve the same MSE as the triangular kernel. We therefore expect the Epanechnikov kernel with $n=504$ to have the same MSE as the triangular kernel with $n=500$.}

Where possible, we use the asymptotically optimal bandwidth sequences for $z=1$ and $z=0$. For intermediate values of $z$, we linearly interpolate the bandwidth.\footnote{Specifically, we use a bandwidth $h = h_{0}\parens[\big]{1-\min\cparens[\big]{\frac{x}{h_{1}},1}} + h_{1}\min\cparens[\big]{\frac{x}{h_{1}},1}$, where $h_{0}$ and $h_{1}$ are the asymptotically optimal bandwidths at $z=0$ and $z=1$ and $x$ is the point of evaluation. This bandwidth selection rule implies that the same window is used to estimate the density and its derivatives at all points within $h_{1}$ of the boundary, i.e.\ $x+h = 2h_{1}$ for all $x<h_{1}$.} For the asymptotically unbiased version of our density estimator, PS$_{3}$ with $S=1$, there is no finite optimal bandwidth unless we assume more derivatives of $f$. Instead, we choose the bandwidth at $z=0$ so that the asymptotic variance is the same as the variance of PS$_{2}$ at the boundary. For the generalized reflection method, which requires separate bandwidths and finite--difference approximation to estimate $L'$ in a first step, we do not develop a theory of the asymptotically optimal inputs. Instead, we select a finite--differencing scheme and choose a combination of auxiliary bandwidths so that the pilot estimate of $L'$ at the boundary and the density estimate at $z=1$ have the same asymptotic variances as our method using $g\parens{t}=\parens{t+z}\parens{1-t}$. Specifically, we use a main bandwidth equal to the asymptotically optimal bandwidth for our method at $z=1$, and we estimate $L'\parens{0}$ using $\cparens{\log f_{nh_{L'}}\parens{2h_{L'}}-\log f_{nh_{L'}}\parens{0}}\mathrel/\parens{2h_{L'}}$, where $f_{nh_{L'}}\parens{2 h_{L'}}$ and $f_{nh_{L'}}\parens{0}$ are a kernel and boundary--kernel estimator for the density whose variance is proportional to $1/nh_{L'}$.

Comparing the square root of the mean squared error (RMSE) of the density estimates at zero in \cref{tab:rmsef0} and the RMSE for the derivative in \cref{tab:rmsefprime0}, there is no clear ranking of the estimators. \Cref{fig:rmsef} depicts the bias and RMSE of the local linear estimators for $\theta=4$. The linear approximation of $f$ (LS--CJM) performs well when $f$ is a beta distribution, but has difficulty estimating the truncated normal density and the estimate is often negative. KZ is generally neither the best nor the worst of the estimators we consider here, but we acknowledge that we have not optimized the inputs into the KZ estimator as thoroughly as the other estimator's inputs. We also note that KZ provides a familiar benchmark away from the boundary because it is simply the RP density estimator using an Epanechnikov kernel for $z=1$.

\begin{table}[!htbp] \centering 
  \caption{RMSE of the estimators for the density at the boundary.} 
  \label{tab:rmsef0} 
\begin{tabular}{@{\extracolsep{5pt}} cccccc} 
\\[-1.8ex]\hline 
\hline \\[-1.8ex] 
 & Estimator & $f_1$ & $f_2$ & $f_3$ & $f_4$ \\ 
\hline \\[-1.8ex] 
1 & LS--CJM & $0.067$ & $0.041$ & $0.081$ & $0.050$ \\ 
2 & PS$_1$ & $0.065$ & $0.022$ & $0.068$ & $0.040$  \\ 
3 & PS$_2$ & $0.064$ & $0.022$ & $0.063$ & $0.039$ \\ 
4 & PS$_3$ & $0.062$ & $0.014$ & $0.063$ & $0.041$  \\ 
5 & KZ & $0.077$ & $0.025$ & $0.075$ & $0.042$ \\ 
6 & Loader & $0.064$ & $0.021$ & $0.064$ & $0.039$ \\ 
\hline \\[-1.8ex] 
\end{tabular} 
\end{table}

\begin{table}[!htbp] \centering 
  \caption{RMSE of the estimators for the derivative of the density at the boundary.} 
  \label{tab:rmsefprime0} 
\begin{tabular}{@{\extracolsep{5pt}} cccccc} 
\\[-1.8ex]\hline 
\hline \\[-1.8ex] 
 & Estimator & $f_1$ & $f_{2}$ & $f_3$ & $f_4$ \\ 
\hline \\[-1.8ex] 
1 & LS--CJM & $0.145$ & $0.112$ & $0.394$ & $0.242$  \\ 
2 & PS$_{1}$ & $0.172$ & $0.028$ & $0.436$ & $0.185$  \\ 
3 & PS$_{2}$ & $0.169$ & $0.029$ & $0.400$ & $0.176$  \\ 
4 & PS$_3$ & $0.197$ & $0.054$ & $0.408$ & $0.172$  \\ 
5 & KZ & $0.103$ & $0.153$ & $1.246$ & $0.486$  \\ 
6 & Loader & $0.172$ & $0.029$ & $0.403$ & $0.175$  \\ 
\hline \\[-1.8ex] 
\end{tabular} 
\end{table} 

As expected, PS$_{1}$, PS$_{2}$, PS$_{3}$, and Loader perform similarly away from the boundary. The differences between these estimators in the top two figures are due to the relatively large bandwidth, but the curves are nearly indistinguishable in the bottom two figures. At the boundary, Loader appears to consistently achieve a smaller RMSE than PS$_{1}$ as our asymptotic theory predicts. The difference between PS$_{2}$ and Loader is almost indiscernible in this sample size, though we expect PS$_{2}$ to have a slightly smaller asymptotic bias and variance.

While our estimator can closely mimic the performance the local likelihood estimator, our method can also achieve significantly smaller AMSE if we use a larger bandwidth and an alternative $g$. As an extreme example, the simulations results show the bias of PS$_{3}$ is relatively small in all four designs; in fact, it converges to zero faster than $h_{0}^{2}$.  But its MSE appears to be roughly the same at the boundary as the other estimators' and is generally larger for $z>0$ in three of the four designs, indicating that the finite--sample costs of the asymptotic benefits do not justify this ambitious choice of $g$. The notable exception is in the case of the truncated normal, where the higher--order bias terms are in fact zero due to the fact that $f_{2}$ is log--quadratic. As a result, PS$_{3}$ has a markedly smaller RMSE at the boundary. One could also eliminate the asymptotic bias and significantly reduce the MSE for other values of $z$. For example, in the interior one could use $m_{z}\parens{t}=\frac{3}{4}\parens{1-t}^{2}\parens{1+t}$, $g\parens{t}=\parens{t+z}{1-t}\parens{t^{2}+2t - 1}$, and a bandwidth that is 1.875 times larger than that used for Loader and our other estimators, as suggested in \cref{section:optimalg}. In fact, if the density possesses more derivatives than the researcher was willing to assume, the asymptotic gains will typically be realized more quickly in the interior than at the boundary because the third--order bias term is zero, as well. 

We interpret these simulation results as a proof of concept that our approach can improve on the asymptotically optimal local polynomial and RP estimators without sacrificing continuity or nonnegativity of the estimated density. Moreover, we demonstrate these gains are possible in empirically relevant sample sizes even when the researcher ambitiously attempts to eliminate the asymptotic bias at the boundary. In practice, however, researchers might prefer less extreme versions of our estimator that do not require such large bandwidths to achieve a lower AMSE than commonly used alternatives. Indeed, the asymptotically unbiased version of our estimator does not minimize the AMSE for any bandwidth sequence on the order of $n^{-1/5}$, and it would only be advisable if the researcher specifically requires an unbiased estimate.

\begin{figure}[htbp]
\begin{center}
\includegraphics[width=.49\textwidth]{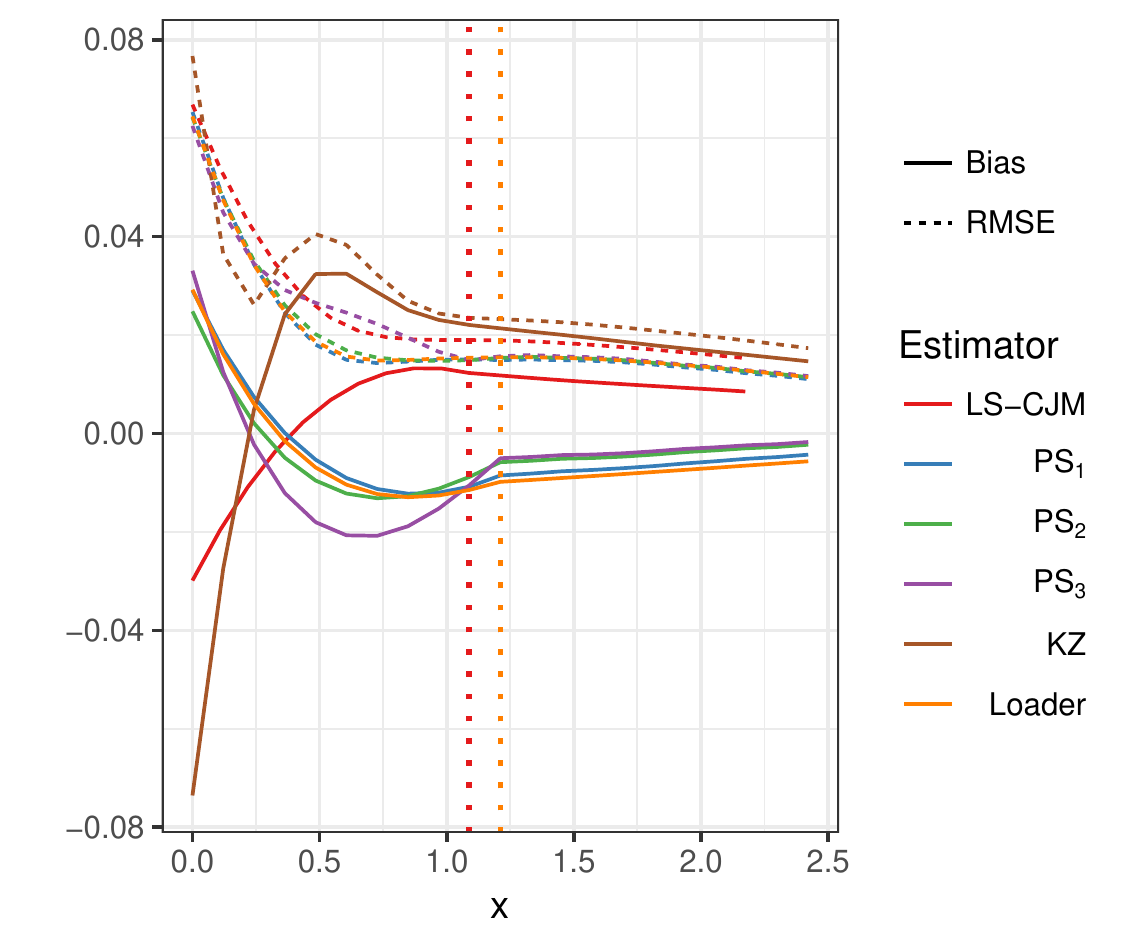}
\includegraphics[width=.49\textwidth]{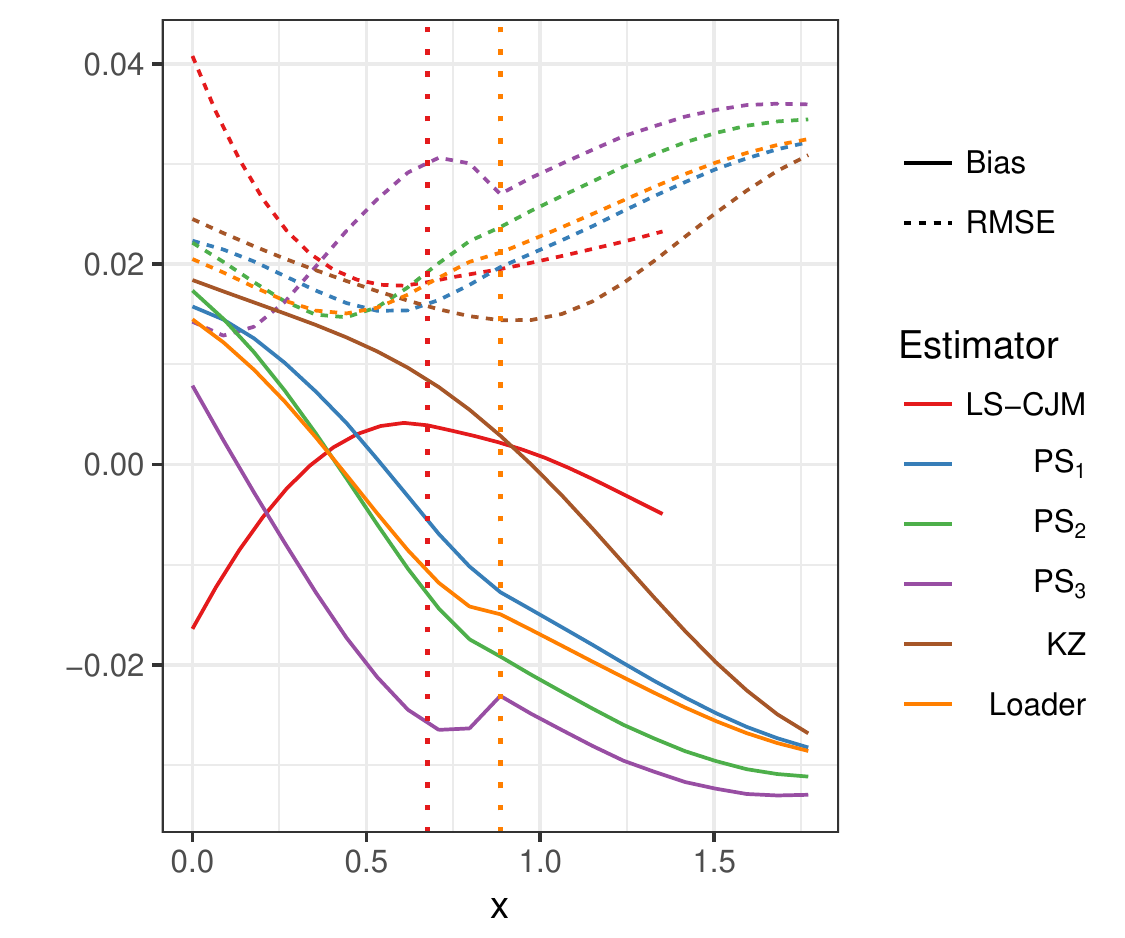}
\includegraphics[width=.49\textwidth]{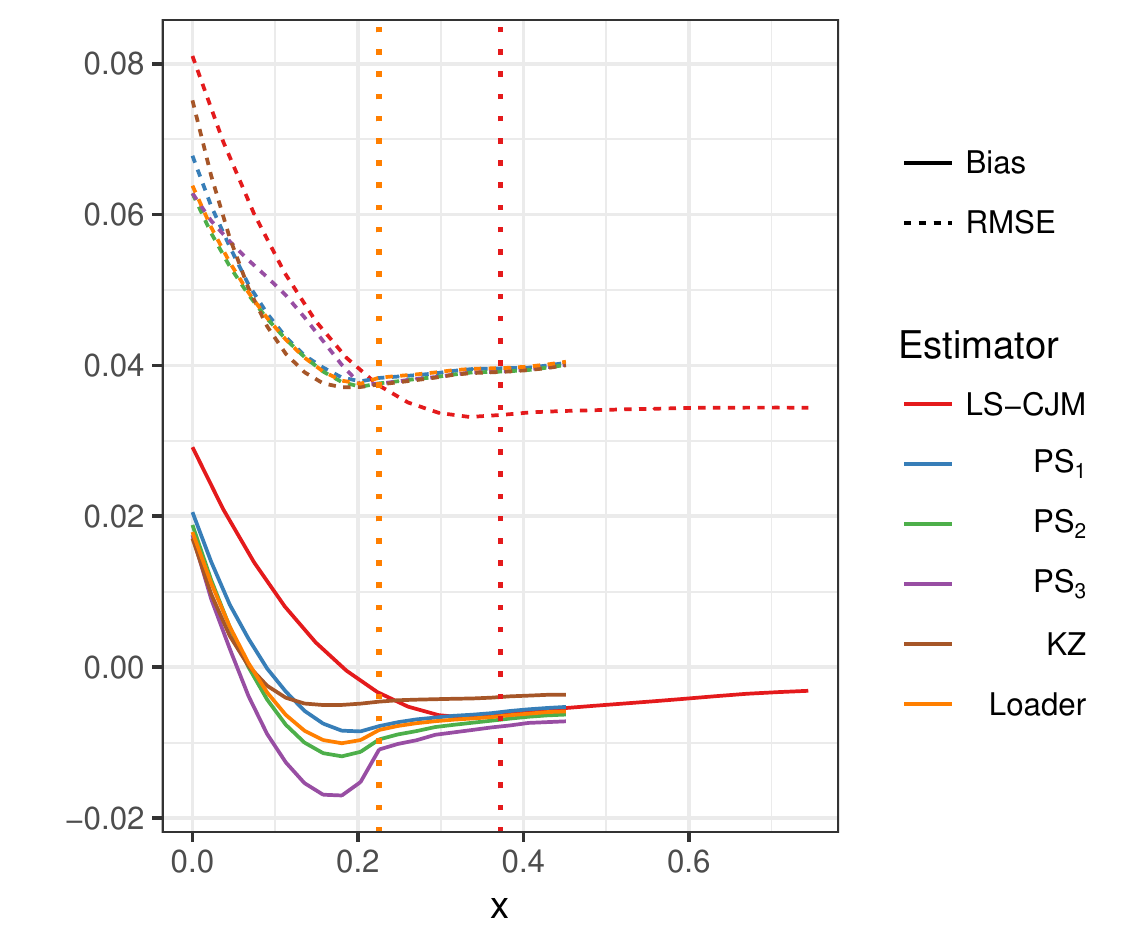}
\includegraphics[width=.49\textwidth]{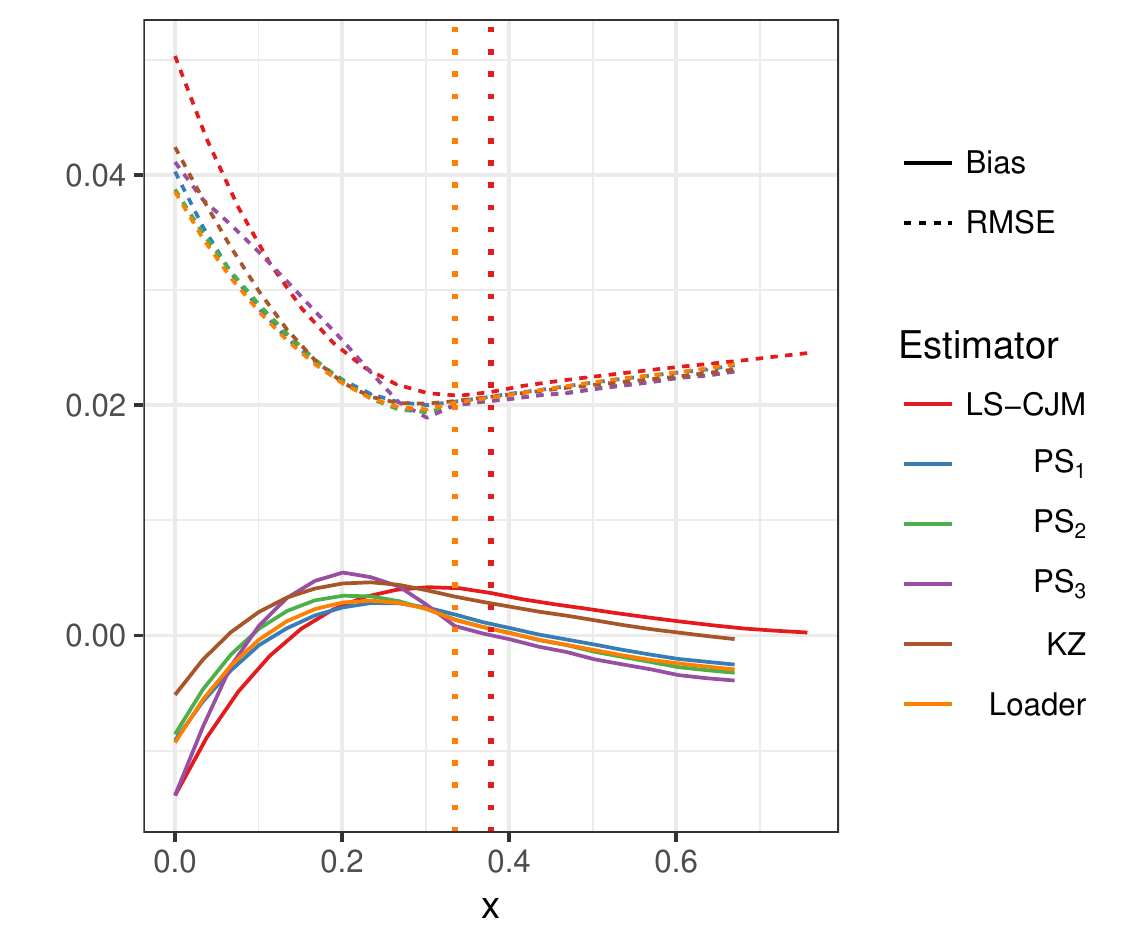}
\caption{The bias and RMSE for estimators of $f_{1}$, $f_{2}$, $f_{3}$, and $f_{4}$ (arranged from left to right) for $\theta=4$. The vertical dotted lines mark the bandwidth used in the interior.}
\label{fig:rmsef}
\end{center}
\end{figure}

In \cref{fig:rmsefprime} we plot the bias and RMSE for the derivative of $f$. In three of the four simulation designs, PS$_{1}$ has the smallest RMSE among the log--linear approximation methods in the interior region, which we expected because $g\parens{t}=\parens{t+z}\parens{t-1}$ minimizes the AMSE in $L'$ using the given bandwidth sequence. At the boundary, however, PS$_{1}$ has a significantly larger AMSE than the alternative choices of $g$, and this is borne out in the simulations to some extent even though the estimator for the derivative converges at the relatively slow rate of $n^{-1/5}$.

\begin{figure}[htbp]
\begin{center}
\includegraphics[width=.49\textwidth]{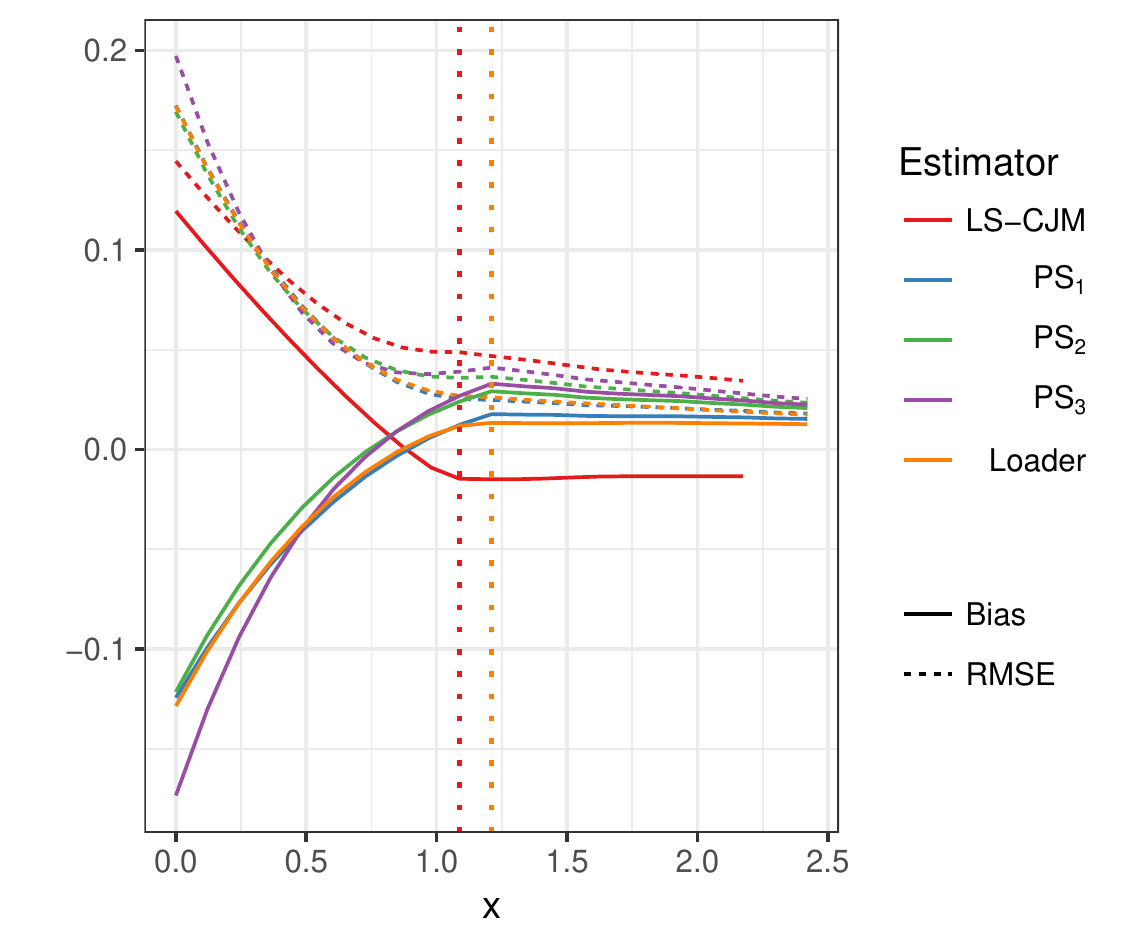}
\includegraphics[width=.49\textwidth]{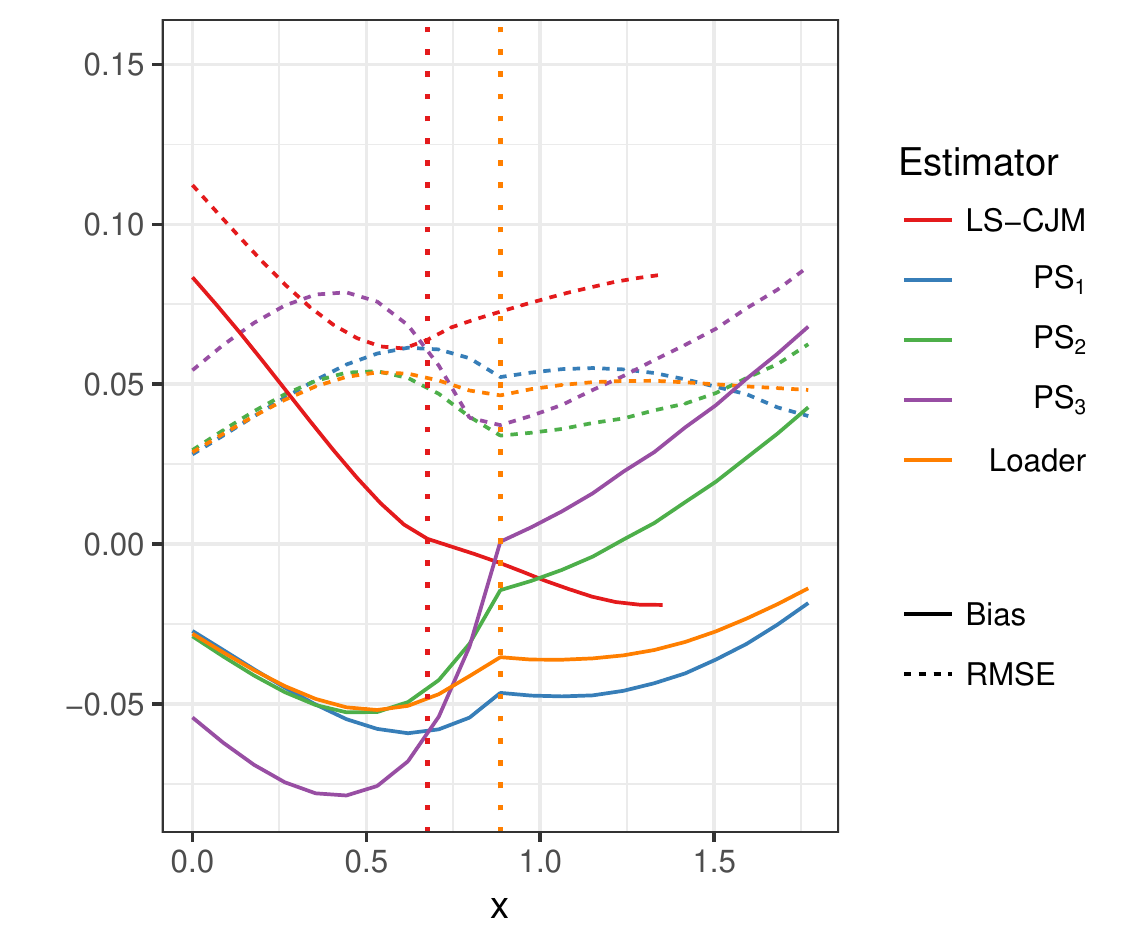}
\includegraphics[width=.49\textwidth]{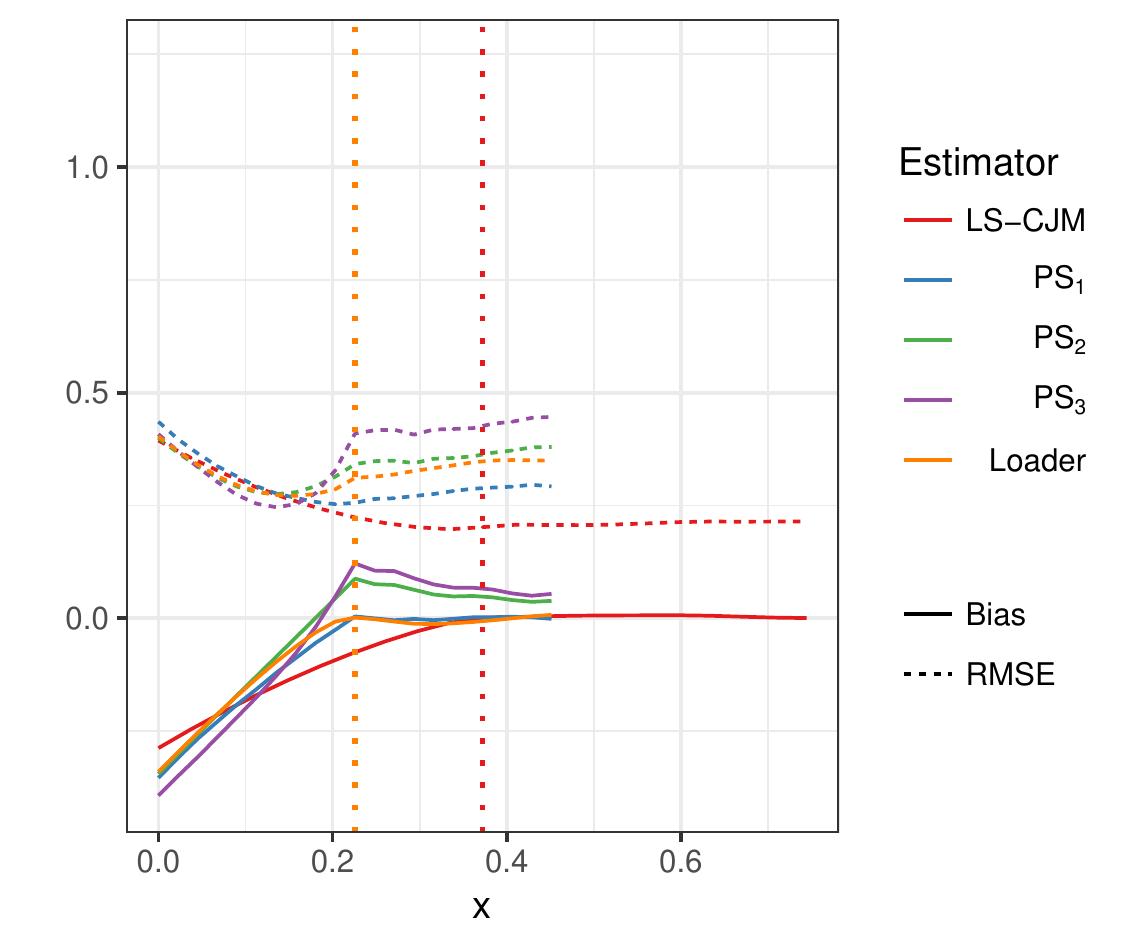}
\includegraphics[width=.49\textwidth]{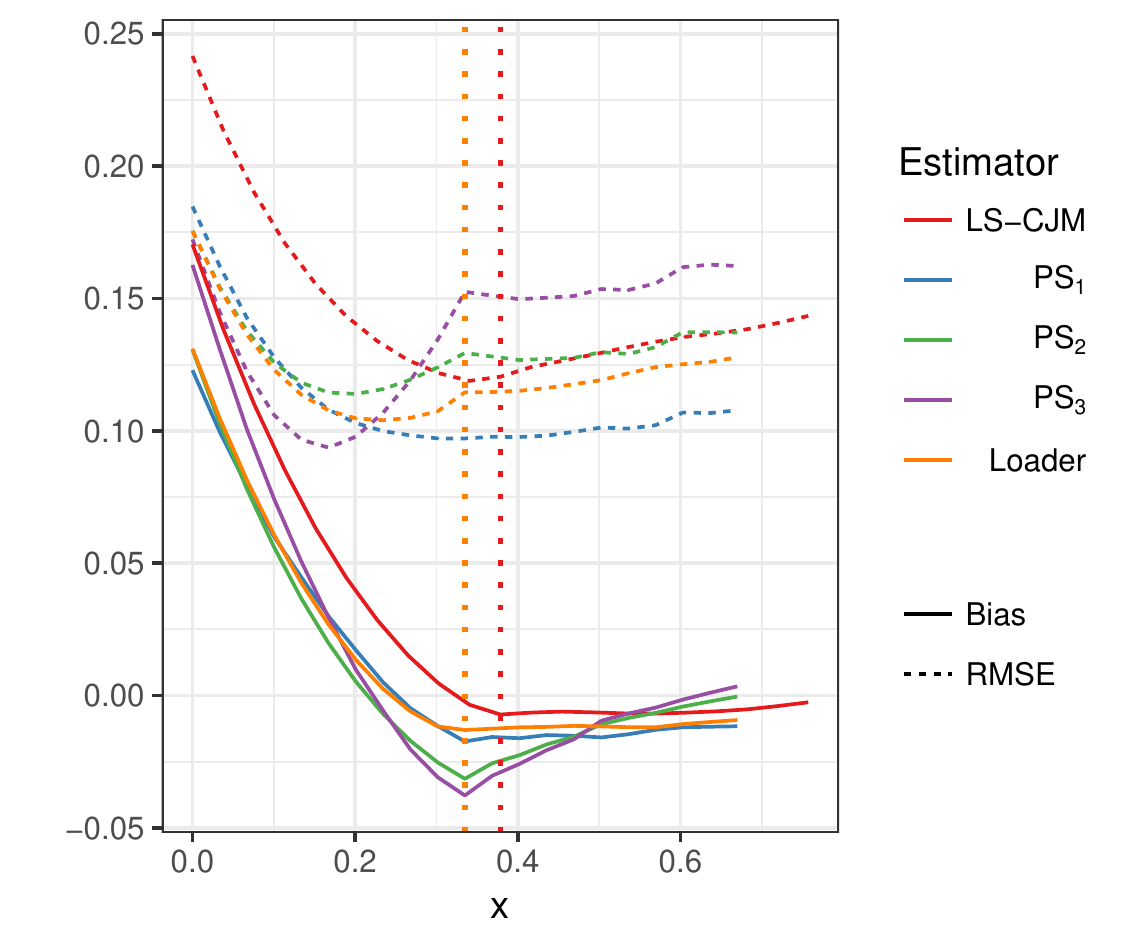}
\caption{The bias and RMSE for estimators of $f'_{1}$, $f'_{2}$, $f'_{3}$, and $f'_{4}$ (arranged from left to right) for $\theta=4$. The vertical dotted lines mark the bandwidth used in the interior.}
\label{fig:rmsefprime}
\end{center}
\end{figure}

\section{Conclusion}
\label{section:conclusion}
We develop an asymptotically normal nonparametric estimator based on a log--polynomial approximation to the unknown density. By approximating the log--density with a polynomial, we can guarantee our estimated density is nonnegative; and by using a polynomial approximation instead of a local constant approximation, we achieve the optimal rates of convergence at the boundary of the support as well as in the interior.

Because our approach allows for a relatively larger degree of customization---the researcher must specify a bandwidth, a kernel, and a vector--valued function $g$ that is zero at the extremes of its support---we explore the optimal set of inputs.  Unlike the standard analysis of optimal kernel and bandwidth inputs, our estimator is nonnegative and continuous in the point of evaluation under a relaxed set of constraints. Because these constraints were needed in order to derive the optimal kernel for use with alternative methods, there is no interior solution to the optimal choice of inputs using our approach. If one fixes the bandwidth sequence, however, the choice of kernel and $g$ can be optimized in a straightforward manner, and our method can achieve the same asymptotic variance as the optimized alternatives if one uses the same bandwidth. Moreover, if the researcher uses a slightly larger bandwidth, marginal reductions in the asymptotic mean squared error are possible.

In fact, there is no positive lower bound to the asymptotic mean squared error using our approach if the researcher is willing to adopt a large enough bandwidth sequence. Although such large bandwidths stretch the plausibility of the asymptotic approximation to the mean squared error in finite samples, we provide simulation evidence that demonstrates significant reductions in the mean squared error can be realized in empirically relevant sample sizes.

More generally, our approach is based on a sample analog to partial integration and can be applied to other settings, as well, such as the estimation of hazard functions or propensity scores.

\appendix

\section{Proofs}

\subsection{Definitions}
\label{app:definitions}

To simplify the argument of $g_j$, define $g_{hxj}(y) = g_j\parens[\big]{\frac{y-x}{h}}$, noting that $g'_{hxj}(y) = \frac1h g_j'\parens[\big]{\frac{y-x}h}$. Rewrite \cref{eq:population system} as 
\(
\ell_{hx} = R_{hx} \beta + r_{hx},
\)
where $\ell_{hx},R_{hx}$ are a vector and a matrix, whose elements are
\< \label{eq:ell R def}
\maligned{
	\ell_{hxj} &=  \Expcr[\bigg]{ g_{hxj}'\parens{\R x_1} \given x-zh \leq \R x_1 \leq x+h},
	\\
	R_{hxjs} & =  - \Expcr[\bigg]{ g_{hxj}\parens{\R x_1} \frac{\parens{\R x_1 - x}^{s-1}}{\parens{s-1}!} \given x-zh \leq \R x_1 \leq x+h},
}
\qquad j,s=1,2,\dots,
\>
and where $r_{hx}=\ell_{hx}-R_{hx}\beta$ is the vector with elements
\< \label{eq:r def}
r_{hjx} = \ell_{hjx} - \sum_{s=1}^S R_{hjsx} \beta_s = R_{hj,S+1,x} \beta_{S+1} + o\parens[\big]{h^S}.
\>
Let $\lh_{hx},\Rh_{hx}$ be sample analogs of $\ell_{hx},R_{hx}$, e.g.
\[
 \lh_{hxj} = \frac{\dsum_{i=1}^n g_{hxj}'\parens{\R x_i}}{n \cparens{\Fhat\parens{x+h}-\Fhat\parens{x-zh}}}, 
\] 
where $\Fhat$ is the empirical distribution function.

Define
\(
\xi_{js}\parens{t} = -\cparens[\big]{g_j(t) t^{s-1}/\parens{s-1}! = \parens{z+t}^j \parens{1-t} t^{s-1} / \parens{s-1}!}\one\parens{-z \leq t \leq 1}
\)
\ and 
\(
\tilde \beta = R_{hx}^{-1} \ell_{hx}.
\)

\subsection{Estimation of derivatives}

\subsubsection{Bias}

\begin{lem} \label{lem:Omega}
\(
 \Omega_{js} = \int_{-z}^1 \xi_{jsz}\parens{t} \dif t = \frac1{\parens{s-1}!} \int_{-z}^1 \parens{z+t}^j \parens{1-t} t^{s-1} \dif t.
\)	
\end{lem}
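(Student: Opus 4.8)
My plan is to note that the two displayed equalities are essentially definitional and then carry out the one piece of genuine work, namely confirming that the resulting integral matches the closed form recorded in \cref{eq:Omega}. The function $\xi_{jsz}$ introduced in \cref{app:definitions} equals $-g_j\parens{t}t^{s-1}/\parens{s-1}!$ on $\sparens{-z,1}$ and vanishes off that interval, so $\int_{-z}^1 \xi_{jsz}\parens{t}\dif t = \frac1{\parens{s-1}!}\int_{-z}^1 -g_j\parens{t}t^{s-1}\dif t$, which is exactly the defining expression for $\Omega_{js}$ in \cref{eq:Omega}; substituting the explicit $g_j$ then gives $\Omega_{js} = \frac1{\parens{s-1}!}\int_{-z}^1 \parens{z+t}^j\parens{1-t}t^{s-1}\dif t$, the last member of the lemma. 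So it remains to evaluate $\int_{-z}^1 \parens{z+t}^j\parens{1-t}t^{s-1}\dif t$.

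For this I would write $P\parens{t} = \parens{1-t}t^{s-1} = t^{s-1}-t^s$, a polynomial of degree $s$ with $P\parens{1}=0$, and integrate by parts $s$ times, each time antidifferentiating the $\parens{z+t}$-power and differentiating the polynomial factor. The lower boundary terms at $t=-z$ all vanish, because the antiderivative of the power factor always carries a strictly positive power of $\parens{z+t}$; at $t=1$ the first boundary term vanishes since $P\parens{1}=0$, and the remaining boundary terms combine (and the process terminates, since $P^{\parens{s+1}}\equiv 0$) to give
\[
\int_{-z}^1 \parens{z+t}^j P\parens{t}\dif t \;=\; \sum_{m=1}^s \parens{-1}^m \frac{j!}{\parens{j+m+1}!}\parens{1+z}^{j+m+1}P^{\parens{m}}\parens{1}.
\]
A one-line differentiation gives $P^{\parens{m}}\parens{1} = \frac{\parens{s-1}!}{\parens{s-1-m}!} - \frac{s!}{\parens{s-m}!} = -\frac{m\parens{s-1}!}{\parens{s-m}!}$ for $1\le m\le s$, reading $1/\parens{s-1-m}!$ as $0$ when $m=s$. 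Dividing by $\parens{s-1}!$ and inserting this value yields
\[
\Omega_{js} \;=\; \sum_{m=1}^s \parens{-1}^{m+1}\frac{m\,j!}{\parens{j+m+1}!\,\parens{s-m}!}\parens{1+z}^{j+m+1},
\]
and the reindexing $\tau = s-m$ turns the right-hand side into exactly the sum written in \cref{eq:Omega}.

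I do not expect a real obstacle; the only thing demanding care is the sign and factorial bookkeeping across the $s$ successive integrations by parts, together with the check that the $t=-z$ boundary terms all drop out — which is precisely where the structural conditions $g\parens{-z}=g\parens{1}=0$ from \cref{section:estimator}, already exploited in \cref{eq:population system}, are doing their work. One could instead expand $\parens{z+t}^j$ by the binomial theorem and integrate termwise, but the $\parens{-z}^{m+1}$ contributions from the lower limit make the subsequent reindexing more awkward, so integration by parts is the cleaner route.
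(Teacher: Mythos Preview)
Your argument is correct and arrives at the closed form in \cref{eq:Omega}; the repeated integration by parts is sound, the boundary terms vanish for exactly the reasons you give, your formula for $P^{\parens{m}}\parens{1}$ is right (including at $m=s$), and the reindexing $\tau=s-m$ reproduces the displayed sum verbatim.

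The paper takes a different route. Rather than integrating by parts, it rewrites the integrand purely in powers of $\parens{z+t}$ via the two substitutions $1-t=\parens{1+z}-\parens{z+t}$ and $t^{s-1}=\parens{\parens{z+t}-z}^{s-1}$, integrates each resulting power directly (the lower limit again contributes nothing), and then performs a \emph{second} binomial expansion $\parens{-z}^i=\parens{1-\parens{1+z}}^i$ to reorganise the answer in powers of $1+z$. This leaves a double sum whose inner sum must collapse to a single term via a combinatorial identity that the paper states but does not justify. Your integration-by-parts approach sidesteps this last step entirely: the answer comes out directly as a single sum, so there is no hidden identity to verify. The paper's method has the minor advantage of being more mechanical (no derivatives to track), but your route is shorter and more transparent about why the final expression has the shape it does.
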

\begin{proof}
The right hand side equals
\begin{multline*}
\frac1{\parens{s-1}!} \int_{-z}^1 \parens{z+t}^j \cparens{1+z -\parens{z+t}} \parens{z+t-z}^{s-1} \dif t =
\\
\frac1{\parens{s-1}!} \parens[\bigg]{ 
	\parens{1+z}\int_{-z}^1 \parens{z+t}^j \parens{z+t-z}^{s-1} \dif t 
	- \int_{-z}^1 \parens{z+t}^{j+1} \parens{z+t-z}^{s-1}  \dif t
} =
\\
\frac1{\parens{s-1}!} \sum_{i=0}^{s-1} {{s-1}\choose i} \parens{-z}^i
\parens[\bigg]{
\parens{1+z} \int_{-z}^1 \parens{z+t}^{j+s-1-i} \dif t
- 
\int_{-z}^1 \parens{z+t}^{j+s-i} \dif t
}
=
\\
 \sum_{i=0}^{s-1}\frac{\parens{-z}^i \parens{1+z}^{j+s+1-i}}{i! \parens{s-1-i}!\parens{j+s-i}\parens{j+s-i+1}}
 =
 \sum_{i=0}^{s-1} \sum_{t=0}^i {i\choose t} \frac{ \parens{-1}^{i+t} \parens{1+z}^{j+s+1-t} }
  {i! \parens{s-1-i}! \parens{j+s-i}\parens{j+s-i+1}}
  \\
  =
  \sum_{t=0}^{s-1} \parens{1+z}^{j+s+1-t} \sum_{i=t}^{s-1}  \frac{ \parens{-1}^{i+t}  }
  {t!\parens{i-t}! \parens{s-1-i}! \parens{j+s-i}\parens{j+s-i+1}}
  =
  \sum_{t=0}^{s-1} \parens{1+z}^{j+s+1-t}  \frac{\parens{-1}^{s+1+t}\parens{s-t} j! }{\parens{j+s+1-t}!t!}. \qedhere 
\end{multline*}
\end{proof}

\begin{lem} \label{lem:r}
\(
 r_{hjx} = \ell_{hjx} - \sum_{s=1}^S \beta_s R_{hjsx} = \beta_{S+1} h^S \Omega_{j,S+1} / \parens{1+z} + o\parens{h^S}.
\)
\end{lem}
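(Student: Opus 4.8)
The plan is to expand $\ell_{hjx}$ and $R_{hjsx}$ in powers of $h$ using the substitution $t=(y-x)/h$ and the Taylor expansion of $L'$ underlying \cref{eq:population system}, and then to recognize the resulting combination as $\beta_{S+1}h^S \Omega_{j,S+1}$ up to the normalizing constant coming from the conditioning event. First I would note that $r_{hjx}=R_{hj,S+1,x}\beta_{S+1}+o(h^S)$ is already recorded in \cref{eq:r def}; hence it suffices to show that $R_{hj,S+1,x} = h^S\Omega_{j,S+1}/(1+z) + o(h^S)$. Changing variables $y=x+th$ in the definition \cref{eq:ell R def} of $R_{hxjs}$ gives
\[
R_{hj,S+1,x} = -\,\frac{\displaystyle\int_{-z}^1 g_j(t)\,\frac{(th)^{S}}{S!}\,f(x+th)\dif t}{\displaystyle\int_{-z}^1 f(x+th)\dif t}
= -\,\frac{h^S}{S!}\cdot\frac{\displaystyle\int_{-z}^1 g_j(t)\,t^{S}\,f(x+th)\dif t}{\displaystyle\int_{-z}^1 f(x+th)\dif t},
\]
because $\Fhat$ is replaced by $F$ in the population object and $F(x+h)-F(x-zh)=\int_{-z}^1 f(x+th)\,h\dif t$.

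The next step is to control the two integrals as $h\to 0$. Since $f$ is continuous at $x$ (indeed $L$ is $S+1$ times continuously differentiable in a neighborhood of $x$, so $f=e^L$ is continuous there) and bounded on the domain of integration, $f(x+th)=f(x)+o(1)$ uniformly in $t\in[-z,1]$. Therefore the denominator equals $f(x)\,(1+z)+o(1)$ — using $\int_{-z}^1\dif t = 1+z$ — and the numerator equals $f(x)\int_{-z}^1 g_j(t)t^{S}\dif t + o(1)$. Substituting and invoking \cref{lem:Omega} with $s=S+1$, namely $\Omega_{j,S+1} = \frac{1}{S!}\int_{-z}^1 -g_j(t)t^{S}\dif t$, yields
\[
R_{hj,S+1,x} = -\,\frac{h^S}{S!}\cdot\frac{f(x)\int_{-z}^1 g_j(t)t^S\dif t + o(1)}{f(x)(1+z)+o(1)}
= \frac{h^S\,\Omega_{j,S+1}}{1+z} + o(h^S).
\]
Multiplying by $\beta_{S+1}$ and combining with \cref{eq:r def} gives the claim.

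I do not expect a genuine obstacle here; the one point requiring a little care is the passage from the sample conditioning objects to their population counterparts. In \cref{eq:r def} the identity $r_{hjx}=\beta_{S+1}R_{hj,S+1,x}+o(h^S)$ is the exact statement of \cref{eq:population system} after the integration-by-parts step together with the Taylor remainder of $L'$, so no approximation beyond the stated $o(h^S)$ enters; the remainder term is $o(h^S)$ precisely because $L$ has $S+1$ continuous derivatives near $x$, which makes the degree-$(S+1)$ Taylor expansion of $L'$ valid with a remainder that is $o((y-x)^{S})$ uniformly on the shrinking interval $[x-zh,x+h]$. The only other thing to check is that the denominator $\int_{-z}^1 f(x+th)\dif t$ is bounded away from zero for small $h$, which holds because $f(x)>0$ and $f$ is continuous at $x$. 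Everything else is the change of variables and \cref{lem:Omega}, both routine.
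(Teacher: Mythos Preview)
Your proof is correct and follows essentially the same route as the paper: a change of variables $y=x+th$, the approximation $F(x+h)-F(x-zh)\sim (1+z)hf(x)$, and identification of the leading integral with $\Omega_{j,S+1}$ via \cref{lem:Omega}. The only difference is cosmetic: you invoke \cref{eq:r def} directly to reduce the problem to evaluating $R_{hj,S+1,x}$, whereas the paper re-derives that step by writing out the integration by parts on $\ell_{hxj}$ and the Taylor expansion of $L'$ explicitly before arriving at the same integral.
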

\begin{proof}
From \cref{eq:ell R def,eq:r def} it follows that 
\begin{multline*}
 r_{hxj} = \ell_{hxj} - \sum_{s=1}^S \beta_s R_{hxjs} = - \frac1{F\parens{x+h}-F\parens{x-zh}} \int_{x-zh}^{x+h}  g_{hxj}\parens{y} L'\parens{y} \dif y- \sum_{s=1}^S \beta_s R_{hjsx}
 =\\
 -
  \frac{\beta_{S+1}}{F\parens{x+h}-F\parens{x-zh}} \int_{x-zh}^{x+h}  g_{hxj}\parens{y} 
   \frac{ \parens{y-x}^S}{S!}  f\parens{y} \dif y + o\parens{h^S}
   \\
  = 
  -  
  \frac{h^{S+1} \beta_{S+1}}{F\parens{x+h}-F\parens{x-zh}} \int_{-z}^1  \cparens[\big]{\parens{t+z}^{j+1} - \parens{1+z} t^j }
  \frac{ t^S}{S!}  f\parens{x+th} \dif t + o\parens{h^S}
  \\
  =
  - \frac{h^{S} \beta_{S+1}}{f\parens{x} \parens{1+z}} \int_{-z}^1 \xi_{j,S+1,z}\parens{t} f\parens{x} \dif t + o\parens{h^{S}}
  \\
  =
  \frac{h^S \beta_{S+1} \Omega_{j,S+1}\parens{z}}{1+z} + o\parens{h^S},
\end{multline*}
by the mean value theorem and \cref{lem:Omega}.
\end{proof}

\begin{lem} \label{lem:R}
\(
 R_{hxjs} = h^{s-1}  \Omega_{js} / \parens{1+z} + o\parens{h^{s-1}}.
\)	
\proof
Repeat the steps of the proof of \cref{lem:r}. \qed
\end{lem}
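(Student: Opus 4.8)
The plan is to follow the proof of \cref{lem:r} essentially line for line, the only differences being that the power of $\parens{\R x_1-x}$ inside the conditional expectation is $s-1$ rather than $S$ and that no Taylor expansion of $L'$ enters. First I would unpack the conditional expectation in the definition of $R_{hxjs}$ in \cref{eq:ell R def}: since $\R x_1$ has density $f$ and the conditioning event $\cparens{x-zh\leq\R x_1\leq x+h}$ has probability $F\parens{x+h}-F\parens{x-zh}$,
\[
 R_{hxjs} = -\frac1{F\parens{x+h}-F\parens{x-zh}} \int_{x-zh}^{x+h} g_{hxj}\parens{y}\, \frac{\parens{y-x}^{s-1}}{\parens{s-1}!}\, f\parens{y}\dif y.
\]

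Next I would substitute $y=x+th$, using $g_{hxj}\parens{y}=g_j\parens{t}$, $\parens{y-x}^{s-1}=h^{s-1}t^{s-1}$ and $\dif y=h\dif t$; this pulls a factor $h^s$ out of the numerator, while $F\parens{x+h}-F\parens{x-zh}=h\dint_{-z}^1 f\parens{x+th}\dif t$ pulls a factor $h$ out of the denominator, leaving $h^{s-1}$ overall:
\[
 R_{hxjs} = -\,h^{s-1}\,\frac{\dint_{-z}^1 g_j\parens{t}\, t^{s-1} f\parens{x+th}\dif t \big/ \parens{s-1}!}{\dint_{-z}^1 f\parens{x+th}\dif t}.
\]
Since $f$ is continuous at $x$ and bounded near $x$, the mean value theorem (or dominated convergence) gives $\dint_{-z}^1 f\parens{x+th}\dif t = \parens{1+z}f\parens{x}+o\parens{1}$ and $\dint_{-z}^1 g_j\parens{t} t^{s-1} f\parens{x+th}\dif t \big/ \parens{s-1}! = f\parens{x}\dint_{-z}^1 g_j\parens{t} t^{s-1}\dif t \big/ \parens{s-1}! + o\parens{1}$. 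By the definition of $\Omega_{js}$ in \cref{eq:Omega} (equivalently by \cref{lem:Omega}), $\dint_{-z}^1 g_j\parens{t} t^{s-1}\dif t \big/ \parens{s-1}! = -\Omega_{js}$, so after cancelling the common factor $f\parens{x}$ the ratio has leading value $-\Omega_{js}/\parens{1+z}$; multiplying by the prefactor $-h^{s-1}$ then gives $R_{hxjs}=h^{s-1}\Omega_{js}/\parens{1+z}+o\parens{h^{s-1}}$, as claimed.

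I do not expect any genuine obstacle: this is the bias computation of \cref{lem:r} with $S$ replaced by $s-1$ and with no Taylor expansion of $L'$ appearing. The one step that warrants a word of care is the passage from the ratio of the two $\parens{1+o\parens{1}}$--type expansions to a single $o\parens{h^{s-1}}$ remainder, which is legitimate precisely because $f\parens{x}>0$, so that the rescaled denominator $\dint_{-z}^1 f\parens{x+th}\dif t$ stays bounded away from zero for small $h$ and the error of the ratio is genuinely $o\parens{1}$. Note that only $h\to0$ and continuity of $f$ near $x$ are used; the rate condition $nh^3\to\infty$ from \cref{thm:beta} plays no role in this (nonrandom) statement.
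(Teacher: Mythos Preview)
Your proof is correct and follows exactly the approach the paper intends: you unpack the conditional expectation as an integral over $[x-zh,x+h]$ divided by $F\parens{x+h}-F\parens{x-zh}$, change variables to $t=\parens{y-x}/h$, and then replace $f\parens{x+th}$ by $f\parens{x}$ in numerator and denominator via continuity, invoking \cref{lem:Omega} for the identification of the leading constant. This is precisely ``repeating the steps of the proof of \cref{lem:r}'' with the Taylor expansion of $L'$ omitted, as you note.
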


\subsubsection{Distribution}

Let $\Bh_{hxjs} = \Rh_{hxjs} \cparens{\Fhat\parens{x+h}-\Fhat\parens{x-zh}} /h^s$,
$B_{hxjs} = R_{hxjs} \cparens{F\parens{x+h}-F\parens{x-zh}}/h^s$,
$\ah_{hx} = \lh_{hx}\cparens{\Fhat\parens{x+h}-\Fhat\parens{x-zh}}/h$, and
$a_{hx} = \ell_{hx} \cparens{F\parens{x+h}-F\parens{x-zh}}/h$.

\begin{lem} \label{lem:B}
\(
 B_{hxjs} = f\parens{x} \Omega_{js}  + o\parens{1}.
\)
\end{lem}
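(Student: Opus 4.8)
The plan is to relate $B_{hxjs}$ to the already-computed quantities $R_{hxjs}$ and a simple ratio of probabilities, and then invoke \cref{lem:R}. Recall the definition $B_{hxjs} = R_{hxjs}\cparens{F\parens{x+h}-F\parens{x-zh}}/h^s$, so the proof amounts to multiplying the expansion in \cref{lem:R} by the factor $\cparens{F\parens{x+h}-F\parens{x-zh}}/h^{s-1}\cdot h^{-1}\cdot h^{s-1}$... more precisely, $B_{hxjs} = \bigl(R_{hxjs}/h^{s-1}\bigr)\cdot\bigl(F\parens{x+h}-F\parens{x-zh}\bigr)/h$. By \cref{lem:R} the first factor is $\Omega_{js}/\parens{1+z}+o\parens{1}$, so the remaining work is to show that $\bigl(F\parens{x+h}-F\parens{x-zh}\bigr)/h \to f\parens{x}\parens{1+z}$.

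First I would establish this second limit. Since $F$ has density $f$ which is continuous at $x$ (indeed $L=\log f$ is assumed $S+1\ge 2$ times continuously differentiable in a neighborhood of $x$, so $f$ is continuous there), we may write
\[
\frac{F\parens{x+h}-F\parens{x-zh}}{h} = \frac1h\int_{x-zh}^{x+h} f\parens{y}\dif y = \int_{-z}^{1} f\parens{x+th}\dif t,
\]
where we have substituted $y = x+th$. As $h\to 0$, the integrand converges to $f\parens{x}$ uniformly on the bounded interval $\sparens{-z,1}$ by continuity of $f$ (and $z\in\sparens{0,1}$ is bounded), so the integral converges to $f\parens{x}\parens{1+z}$. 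Alternatively, the mean value theorem for integrals gives $\int_{-z}^1 f\parens{x+th}\dif t = f\parens{x+\tilde t h}\parens{1+z}$ for some $\tilde t\in\sparens{-z,1}$, and $f\parens{x+\tilde t h}\to f\parens{x}$.

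Combining the two pieces: $B_{hxjs} = \bigl(\Omega_{js}/\parens{1+z}+o\parens{1}\bigr)\bigl(f\parens{x}\parens{1+z}+o\parens{1}\bigr) = f\parens{x}\Omega_{js}+o\parens{1}$, since $\Omega_{js}$, $f\parens{x}$, and $1/\parens{1+z}$ are all finite constants (with $z\ge 0$ so $1+z\ge 1$ is bounded away from zero). This is the claim. There is no real obstacle here; the only point requiring a modicum of care is that the bandwidth-dependence of $z$ when $x=zh$ does not cause trouble, but since $z\in\sparens{0,1}$ regardless, the convergence $\int_{-z}^1 f\parens{x+th}\dif t\to f\parens{x}\parens{1+z}$ holds uniformly over admissible $z$ and the argument goes through verbatim; the factor $1+z$ is then simply carried along symbolically.
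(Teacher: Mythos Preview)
Your proof is correct and follows the same approach as the paper: the paper's proof is a one-line appeal to \cref{lem:R}, and you have simply filled in the details, namely the factorization $B_{hxjs}=\parens{R_{hxjs}/h^{s-1}}\cdot\parens{F\parens{x+h}-F\parens{x-zh}}/h$ together with the easy limit $\parens{F\parens{x+h}-F\parens{x-zh}}/h\to f\parens{x}\parens{1+z}$.
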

\begin{proof}
Follows from \cref{lem:R}.
\end{proof}

\begin{lem} \label{lem:ugh}
 \(
 \Bh_{hx} - B_{hx} = O_p\parens{1/\sqrt{nh}}
 \) 
 and 
 \(
 \sqrt{nh^3} \parens{\ah_{hx} - a_{hx} } \convd N\cparens{0,f\parens{x} \parens{1+z}V }.
 \)
 \end{lem}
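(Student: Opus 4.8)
The plan is to prove the two claims in \cref{lem:ugh} separately, treating them as a ``rate'' statement and a ``CLT'' statement respectively, both of which reduce to standard U-statistic / empirical process arguments once we expose the right i.i.d.\ averages. Throughout, write $\Dhat = \Fhat\parens{x+h}-\Fhat\parens{x-zh}$ and $D = F\parens{x+h}-F\parens{x-zh}$; note $D = f\parens{x}h\parens{1+z} + o\parens{h}$ by the mean value theorem, so $D \asymp h$, and $\Dhat - D = O_p\parens{\sqrt{D/n}} = O_p\parens{\sqrt{h/n}}$ by Chebyshev, which in particular gives $\Dhat/D = 1 + O_p\parens{1/\sqrt{nh}}$. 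The key observation is that the normalizations were chosen precisely so that $\Bh_{hxjs}$ and $\ah_{hxj}$ are \emph{unnormalized} sample averages: $\Bh_{hxjs} = \frac1{nh^s}\sum_{i=1}^n \cparens[\big]{-g_{hxj}\parens{\R x_i}\parens{\R x_i-x}^{s-1}/\parens{s-1}!}$ and $\ah_{hxj} = \frac1{nh}\sum_{i=1}^n g'_{hxj}\parens{\R x_i}$, with the $\Dhat$ factors cancelling the denominators in $\Rh_{hx},\lh_{hx}$. So the conditioning-event nuisance disappears entirely and we are left with ordinary i.i.d.\ sums.

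For the first claim, I would compute $\Exp \Bh_{hxjs} = \frac1{h^s}\int \cparens[\big]{-g_j\parens{\frac{y-x}h}\parens{y-x}^{s-1}/\parens{s-1}!} f\parens{y}\dif y = \frac1h \int_{-z}^1 \xi_{jsz}\parens{t} f\parens{x+th}\dif t = B_{hxjs}/(\text{the }D\text{ over }h\text{ factor})$; more cleanly, by construction $\Exp\Bh_{hxjs} = B_{hxjs}$ exactly, since both are the same integral against $f$ divided by $h^s$. Then $\Var\parens{\Bh_{hxjs}} = \frac1{nh^{2s}}\Var\parens[\big]{g_{hxj}\parens{\R x_1}\parens{\R x_1-x}^{s-1}/\parens{s-1}!} \leq \frac1{nh^{2s}}\Exp\parens[\big]{g_{hxj}\parens{\R x_1}^2\parens{\R x_1-x}^{2s-2}}/\parens{s-1}!^2$, and substituting $t=\parens{y-x}/h$ the expectation is $h^{2s-1}\int_{-z}^1 g_j\parens{t}^2 t^{2s-2} f\parens{x+th}\dif t = O\parens{h^{2s-1}}$ because $f$ is bounded near $x$ and $g_j$ has compact support. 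Hence $\Var\parens{\Bh_{hxjs}} = O\parens{1/\parens{nh}}$, and Chebyshev gives $\Bh_{hxjs} - B_{hxjs} = O_p\parens{1/\sqrt{nh}}$ entrywise, hence for the matrix.

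For the second claim, I would apply the Lindeberg--Feller CLT to the triangular array $\sqrt{nh^3}\parens{\ah_{hxj} - a_{hxj}} = \sqrt{h^3/n}\sum_{i=1}^n \cparens[\big]{\frac1{h}g'_{hxj}\parens{\R x_i} - \Exp\frac1h g'_{hxj}\parens{\R x_1}}$. Each summand, after the $\sqrt{h^3/n}$ scaling, is $\frac1{\sqrt{nh}}g'_j\parens{\frac{\R x_i-x}h}$ recentered (using $g'_{hxj}\parens{y} = \frac1h g'_j\parens{\frac{y-x}h}$ so the two powers of $1/h$ cancel one $h$ from $h^3$, leaving $h$ under the root). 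The variance of the sum is $\frac{h^3}n \cdot n \cdot \Var\parens{\frac1{h^2}g'_j\parens{\cdot}} = h^3\cdot\frac1{h^4}\cparens[\big]{\Exp g'_j\parens{\tfrac{\R x_1-x}h}^2 - \parens{\Exp g'_j}^2}$; the mean term is $O\parens{h}$ so its square is $O\parens{h^2} = o\parens{h^4}\cdot h^2$, negligible, while $\Exp g'_j\parens{\frac{\R x_1-x}h} g'_s\parens{\frac{\R x_1-x}h} = h\int_{-z}^1 g'_j\parens{t}g'_s\parens{t} f\parens{x+th}\dif t \to h f\parens{x}\int_{-z}^1 g'_j g'_s = h f\parens{x} V_{js}$, so the covariance matrix of the sum converges to $f\parens{x}V$, not $f\parens{x}\parens{1+z}V$. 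The extra factor $\parens{1+z}$ must come from the $\Dhat$ versus $D$ discrepancy: writing $\ah_{hxj} = \lh_{hxj}\Dhat/h$ and $a_{hxj} = \ell_{hxj}D/h$, and noting $\ell_{hxj} = \Exp\cparens{g'_{hxj}\parens{\R x_1}\mid\cdot}$ is $O\parens{h^{S}}=o\parens{1}$ by \cref{eq:population system,lem:r}-type bounds (the numerator integral is $O\parens{h^{S+1}}$, divided by $D\asymp h$), the cross term $\ell_{hxj}\parens{\Dhat-D}/h$ is $o_p\parens{1}\cdot O_p\parens{\sqrt{h/n}}/h$, which vanishes after multiplying by $\sqrt{nh^3}$. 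So in fact $\sqrt{nh^3}\parens{\ah_{hx}-a_{hx}} = \sqrt{nh^3}\cparens{\lh_{hx}-\ell_{hx}}\Dhat/h + o_p\parens{1}$, and one rewrites this as $\sqrt{nh^3}\cdot\frac1{nh}\sum_i\cparens{g'_{hxj}\parens{\R x_i}-\ell_{hxj}}\cdot\frac{\Dhat}{D}\cdot\frac{D}{h}$; the factor $D/h \to f\parens{x}\parens{1+z}$, and $\Dhat/D\to_p 1$, so by Slutsky the limiting variance is $\parens{D/h}^2 f\parens{x}V / f\parens{x} = f\parens{x}\parens{1+z}^2 V \cdot f\parens{x}/f\parens{x}$\,---\,I need to recompute the bookkeeping carefully, but the mechanism is that $\lh_{hx}-\ell_{hx}$ is an average of $g'_{hxj}$ divided by $nD$ (not $nh$), contributing one factor $\parens{1+z}^{-1}$, and multiplying back by $\Dhat/h\approx\parens{1+z}$ leaves a net single $\parens{1+z}$. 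I would also verify the Lindeberg condition: the summands are uniformly bounded by $\norm{g'_j}_\infty/\sqrt{nh}\to 0$, so Lindeberg is immediate. \textbf{The main obstacle} is exactly this variance bookkeeping\,---\,correctly tracking whether each $1/h$ comes from $g'_{hxj}$, from the normalization $h^s$ or $h$, or from $D\asymp h\parens{1+z}f\parens{x}$, so that the final constant is $f\parens{x}\parens{1+z}V$ and not $f\parens{x}V$ or $f\parens{x}\parens{1+z}^2 V$; the cleanest route is to work throughout with $\ah_{hx} = \lh_{hx}\Dhat/h$ and isolate the decomposition $\ah_{hx}-a_{hx} = \parens{\lh_{hx}-\ell_{hx}}\Dhat/h + \ell_{hx}\parens{\Dhat-D}/h$, show the second term is $o_p\parens{1/\sqrt{nh^3}}$ using $\ell_{hx}=o\parens{1}$ and $\Dhat-D=O_p\parens{\sqrt{h/n}}$, and then apply the CLT to $\sqrt{nh^3}\parens{\lh_{hx}-\ell_{hx}}$ directly, treating $\Dhat/h$ as a consistent-for-$\parens{1+z}f\parens{x}$ multiplier via Slutsky, being careful that $\lh_{hx}-\ell_{hx}$ itself carries a $1/D$.
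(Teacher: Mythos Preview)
Your approach is the paper's: once you observe that the $\Dhat$ factors cancel, leaving $\ah_{hxj}=\frac1{nh}\sum_i g'_{hxj}\parens{\R x_i}$ and $\Bh_{hxjs}=\frac{-1}{nh^s}\sum_i g_{hxj}\parens{\R x_i}\parens{\R x_i-x}^{s-1}/\parens{s-1}!$ as plain i.i.d.\ averages with $\Exp\ah_{hx}=a_{hx}$ and $\Exp\Bh_{hx}=B_{hx}$ exactly, the first claim is Chebyshev and the second is a triangular-array CLT. Your variance computation yielding $f\parens{x}V$ is correct and matches the paper's own proof verbatim.

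The $\parens{1+z}$ factor in the lemma's \emph{statement} is a typo in the paper. The paper's proof computes $nh^3\Cov\parens{\ah_{hxj},\ah_{hxs}}\to f\parens{x}V_{js}$ with no $\parens{1+z}$, and the variance $\parens{\Omega^\tr V^{-1}\Omega}^{-1}/f\parens{x}$ in \cref{thm:beta} is obtained precisely from $f\parens{x}V$ via $B_{hx}\to f\parens{x}\Omega$; had the lemma variance been $f\parens{x}\parens{1+z}V$, an extra $\parens{1+z}$ would survive into the theorem. So your ``main obstacle'' is a phantom, and the decomposition at the end attempting to manufacture a $\parens{1+z}$ from $\Dhat/h$ should be dropped: since $\Dhat$ cancels \emph{exactly} in $\ah_{hx}$, there is simply no place for such a factor to enter. (Incidentally, your side claim that $\ell_{hxj}=o\parens{1}$ is also off---it is $O\parens{1}$ because $R_{hxj1}=O\parens{1}$---though the cross term $\ell_{hxj}\parens{\Dhat-D}/h$ would still be $o_p\parens{1/\sqrt{nh^3}}$ anyway; but none of this is needed.)
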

\begin{proof}
We first show the result for $\ah_{hx}$.
\< \label{eq:ugh}
\sqrt{nh^3} \parens{ \ah_{hx} - a_{hx}}
=
\frac1{\sqrt{nh}} \sum_{i=1}^n \cparens[\big]{g_{hxj}'\parens{x_i}-\Exp g_{hxj}'\parens{\R x_1} }
\convd 
N\cparens{0,f\parens{x}V}\,.
\>
The normality of the limit follows from a standard central limit theorem, e.g.\ \citet{eicker1966multivariate}. Because the estimator is linear and $\Set{x_i}$ is i.i.d., the asymptotic mean and variance are easily obtained, e.g. the $(j,s)$--element of the asymptotic variance matrix is 
\[
nh^3\Cov\parens[\big]{\ah_{hxj},\ah_{hxs}} = h\Cov\parens[\big]{g'_{hxj}\parens{\R x_1}, g'_{hxs}\parens{\R x_{1}}}= f\parens{x} \int_{-z}^1 g_j'\parens{t}g_{s}'\parens{t}\dif t + o\parens{1}
= f\parens{x} V+ o\parens{1}.
\]
Now, for any $j,s=1,\dots,S$,
\[
\sqrt{nh} \parens{\Bh_{hxjs} - B_{hxjs} }
= \\
\frac1{\sqrt{nh}} \sum_{i=1}^n \cparens[\bigg]{ \xi_{js}\parens[\Big]{\frac{\R x_i-x}h} - 
	  \xi_{js}\parens[\Big]{\frac{\R x_1-x}h} } ,
\]
which by standard kernel estimation theory has a limiting mean zero normal distribution and is hence $O_p\parens{1}$.
\end{proof}

\begin{prooft}{\cref{thm:beta}}
Recall that $\bt = R_{hx}^{-1}\ell_{hx}$.  The bias result then follows immediately from \cref{lem:r,lem:R}.  Now the asymptotic distribution.
Note that $\bt = \Lambda_h^{-1} B_{hx}^{-1} a_{hx}$.
Likewise $\bh = \Lambda_h^{-1} \Bh_{hx}^{-1} \ah_{hx}$.  Thus, 
\begin{multline} \label{eq:nasty expansion}
\sqrt{nh^3} \Lambda_h \parens{\bh - \bt} =
 \sqrt{nh^3} \parens{\Bh_{hx}^{-1} - B_{hx}^{-1}} \parens{\ah_{hx} - a_{hx}} + \sqrt{nh^3} \parens{\Bh_{hx}^{-1}-B_{hx}^{-1}}a_{hx} 
 + 
 \sqrt{nh^3} B_{hx}^{-1}\parens{\ah_{hx} -a_{hx} }
 =
 \\
 O_p\cparens[\big]{ \parens{nh}^{-1/2}} + O_p\parens{h} + B_{hx}^{-1}\sqrt{nh^3}\parens{\ah_{hx} -a_{hx} }
 =
 B_{hx}^{-1}\sqrt{nh^3}\parens{\ah_{hx} -a_{hx} } + o_p\parens{1}.
\end{multline}
Since $\lim_{n\to\infty} B_{hx} = f\parens{x} \Omega_{js}$ by \cref{lem:B}, the stated result follows from \cref{lem:ugh}. 
\end{prooft}

\subsection{Estimation of $L$}

Let $\psi_n$ denote the convergence rate of $\max_{s=1,\dots,S} \abs[\big]{ h^s \parens{\bh_s -\beta_s} }$, i.e.\ $n^{-\parens{S+1}/\parens{2S+3}}$, which is $n^{-2/5}$ if $S=1$.  If $S=2$  then the rate is $n^{-3/7}$.

Call the numerator and denominator in \cref{eq:fhat def} $\Nhat$ and $\Dhat$, respectively, and let $N=f\parens{x}$, $D=1$.  Then, our proofs will be based on an expansion of the form
\< \label{eq:ND expansion}
 \frac\Nhat\Dhat - \frac ND \simeq \frac{ \parens{\Nhat-N} - \parens{N/D} \parens{\Dhat -D}  }D  \simeq
  \cparens{\Nhat -f\parens{x}} - f\parens{x} \parens{\Dhat -1 },
\>
where $\simeq$ means that the omitted terms are asymptotically negligible.

The first lemma is concerned with addressing the influence of the $\bh$'s from $\Dhat$.
\begin{lem} \label{lem:exponential integrals approximation}
	\<\label{eq:exponential integrals approximation}
 \dint_{-z}^1  m_z\parens{t} \cparens[\bigg]{\exp\parens[\Big]{\dsum_{s=1}^S \dfrac{\bh_s t^sh^s}{s!}} - \exp\parens[\Big]{\dsum_{s=1}^S \dfrac{\beta_s t^sh^s}{s!}} } \dif t
=
\sum_{s=1}^S h^s\parens{\bh_s-\beta_s} c_{msz} + o_p\parens{\psi_n},
	\>
\end{lem}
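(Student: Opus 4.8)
The plan is to isolate the increment in $\bh$ in a single exponential factor and then Taylor--expand that factor to first order. Write $\upsilon\parens{t} = \dsum_{s=1}^S \bh_s t^s h^s / s!$ and $\upsilon_0\parens{t} = \dsum_{s=1}^S \beta_s t^s h^s / s!$, so that the bracketed expression in \cref{eq:exponential integrals approximation} equals $\e^{\upsilon\parens{t}} - \e^{\upsilon_0\parens{t}} = \e^{\upsilon_0\parens{t}}\parens[\big]{\e^{\upsilon\parens{t}-\upsilon_0\parens{t}} - 1}$. Because $t$ ranges over $\sparens{-z,1}\subseteq\sparens{-1,1}$ and $h\to 0$, we have $\sup_t\abs{\upsilon_0\parens{t}} = O\parens{h}$; and since $\upsilon\parens{t}-\upsilon_0\parens{t} = \dsum_{s=1}^S \parens{\bh_s-\beta_s}t^sh^s/s!$ while $\max_{s}\abs{h^s\parens{\bh_s-\beta_s}} = O_p\parens{\psi_n}$ by \cref{thm:beta} and the definition of $\psi_n$, we also have $\sup_t\abs{\upsilon\parens{t}-\upsilon_0\parens{t}} = O_p\parens{\psi_n} = o_p\parens{1}$.

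I would then use the elementary bounds $\abs{\e^b - 1}\leq\abs{b}\,\e^{\abs{b}}$ and $\abs{\e^a - 1 - a}\leq\tfrac12 a^2\,\e^{\abs{a}}$, which hold for all real $a,b$, with $b=\upsilon_0\parens{t}$ and $a = \upsilon\parens{t}-\upsilon_0\parens{t}$. Since $a$ and $b$ vanish uniformly (in probability), $\e^{\abs{a}}$ and $\e^{\abs{b}}$ are uniformly $O_p\parens{1}$, so $\e^{\upsilon_0\parens{t}} = 1 + O\parens{h}$ and $\e^{\upsilon\parens{t}-\upsilon_0\parens{t}} - 1 = \parens{\upsilon\parens{t}-\upsilon_0\parens{t}} + O_p\parens{\psi_n^2}$, both uniformly in $t$. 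Multiplying,
\[
\e^{\upsilon\parens{t}} - \e^{\upsilon_0\parens{t}} = \upsilon\parens{t}-\upsilon_0\parens{t} + O_p\parens{h\psi_n} + O_p\parens{\psi_n^2} = \upsilon\parens{t}-\upsilon_0\parens{t} + o_p\parens{\psi_n}
\]
uniformly in $t$, since $h\to 0$ and $\psi_n\to 0$.

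It then remains to integrate against $m_z$. As $m_z$ is a nonnegative kernel with $\int_{-z}^1 m_z = 1$, the uniform $o_p\parens{\psi_n}$ remainder integrates to $o_p\parens{\psi_n}$; and interchanging the finite sum with the integral and recalling $c_{msz} = \int_{-z}^1 m_z\parens{t}t^s\dif t / s!$ gives
\[
\int_{-z}^1 m_z\parens{t}\parens{\upsilon\parens{t}-\upsilon_0\parens{t}}\dif t = \dsum_{s=1}^S\frac{h^s\parens{\bh_s-\beta_s}}{s!}\int_{-z}^1 m_z\parens{t}t^s\dif t = \dsum_{s=1}^S h^s\parens{\bh_s-\beta_s}c_{msz},
\]
which is \cref{eq:exponential integrals approximation}.

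The one step that needs care is the uniformity over $t$ of all the stochastic orders above, but this is routine here: $\upsilon_0$ and $\upsilon-\upsilon_0$ are polynomials of fixed degree $S$ on the bounded set $\sparens{-1,1}$, so their sup--norms are governed by the sizes of their coefficients, and the exponential prefactors are uniformly bounded in probability because their arguments vanish uniformly. One should also note that although $z$ depends on $h$ when $x=zh$, it always lies in $\sparens{0,1}$, so the interval of integration never leaves $\sparens{-1,1}$ and none of the bounds degrade. I do not expect any genuine obstacle; note in particular that the leading term on the right of \cref{eq:exponential integrals approximation} is itself $O_p\parens{\psi_n}$, so the claimed $o_p\parens{\psi_n}$ error is genuinely of smaller order.
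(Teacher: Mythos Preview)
Your argument is correct and is essentially the same first--order Taylor expansion the paper uses: both linearize the exponential in the increment $\bh-\beta$ and then absorb the prefactor $\exp\parens{\sum_s \beta_s t^s h^s/s!}=1+O\parens{h}$ into the remainder. The only cosmetic difference is that the paper routes that last step through the identification $\exp\parens{\sum_s \beta_s t^s h^s/s!}\approx f\parens{x+th}/f\parens{x}\to 1$, whereas you bound $\e^{\upsilon_0\parens{t}}=1+O\parens{h}$ directly; your version is a bit cleaner and avoids invoking $f$ at all.
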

\begin{proof}
Expanding $\exp\parens[\big]{\sum_{s=1}^S b_s h^s t^s /s!}$ about $\sparens{\begin{matrix}\beta_1 h & \cdots &\beta_S h^S\end{matrix}}$, the left side in \cref{eq:exponential integrals approximation} equals
\begin{multline*} 
\int_{-z}^1 m_z\parens{t}\sum_{s=1}^S \parens{\bh_s - \beta_s} \frac{t^s h^s}{s!} \exp\parens[\bigg]{\frac{\sum_{s=1}^S \beta_s t^s h^s}{s!}} +o_p\parens{\psi_n}
= \sum_{s=1}^S \parens{\bh_s - \beta_s}h^s \int_{-z}^1 \frac{m_z\parens{t}t^s}{s!}\frac{f(x + th)}{\exp\parens{\beta_0}}\dif t + o_p\parens{\psi_n}\\
	= \sum_{s=1}^S \parens{\bh_s - \beta_s}h^s \int_{-z}^1  \frac{m_z\parens{t}t^s}{s!}\frac{f(x)}{\exp\parens{\beta_0}}\dif t + o_p\parens{\psi_n}
		= \sum_{s=1}^S \parens{\bh_s - \beta_s}h^s c_{msz}+ o_p\parens{\psi_n},
\end{multline*}
as asserted.
\end{proof}

\subsubsection{Bias}

The next few lemmas are concerned with the asymptotic bias.

\begin{lem} \label{lem:L num bias}
	The bias in $\Nhat-f\parens{x}$ is
	\(
	f\parens{x} \sum_{s=1}^{S+1} \sP_s\parens{\beta_1,\dots,\beta_s} c_{msz}h^s + o\parens{h^{S+1}},
	\)
	where $\sP_s$ is a complete exponential Bell polynomial.
\end{lem}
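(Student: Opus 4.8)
The plan is to observe that $\Nhat=\fhat_m\parens{x}$ is an ordinary Rosenblatt--Parzen estimator with kernel $m_z$ and to expand its expectation. Since $m_z$ is nonnegative, supported on $\sparens{-z,1}$ and integrates to one, the change of variables $t=\parens{y-x}/h$ gives
\[
\Exp\Nhat = \int_{-z}^1 m_z\parens{t}\, f\parens{x+th}\dif t,
\]
so the bias of $\Nhat-f\parens{x}$ equals $\Exp\Nhat - f\parens{x}$. First I would Taylor expand $L=\log f$: because $L$ is $S+1$ times continuously differentiable on a neighborhood of $x$ and $x+th$ lies in that neighborhood for every $t\in\sparens{-z,1}$ once $h$ is small, Taylor's theorem together with continuity of $L^{\parens{S+1}}$ gives $L\parens{x+th}-L\parens{x}=\sum_{s=1}^{S+1}\tfrac{\beta_s}{s!}\parens{th}^s + o\parens{h^{S+1}}$ with the remainder \emph{uniform} in $t$. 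Exponentiating, and using that $\exp$ is Lipschitz on bounded sets while the exponent is $O\parens{h}$,
\[
f\parens{x+th}=f\parens{x}\exp\Bigl(\sum_{s=1}^{S+1}\tfrac{\beta_s}{s!}\parens{th}^s\Bigr)\parens{1+o\parens{h^{S+1}}},
\]
uniformly in $t\in\sparens{-z,1}$.

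Next I would invoke the defining generating identity of the complete exponential Bell polynomials \citep{bell1927partition}: as formal power series, $\exp\bigl(\sum_{j\ge1}\tfrac{\beta_j}{j!}u^j\bigr)=\sum_{k\ge0}\tfrac{\sP_k\parens{\beta_1,\dots,\beta_k}}{k!}u^k$. Truncating the exponent at $j=S+1$ does not change the coefficients of $u^0,\dots,u^{S+1}$, so for $u=th\to0$ uniformly over $t\in\sparens{-z,1}$,
\[
\exp\Bigl(\sum_{s=1}^{S+1}\tfrac{\beta_s}{s!}\parens{th}^s\Bigr)=\sum_{k=0}^{S+1}\frac{\sP_k\parens{\beta_1,\dots,\beta_k}}{k!}\parens{th}^k+O\parens{h^{S+2}}.
\]
Substituting this back and absorbing the $O\parens{h^{S+2}}$ into $o\parens{h^{S+1}}$ yields $f\parens{x+th}=f\parens{x}\sum_{k=0}^{S+1}\tfrac{\sP_k\parens{\beta_1,\dots,\beta_k}}{k!}\parens{th}^k+f\parens{x}\,o\parens{h^{S+1}}$, uniformly in $t$. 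Integrating against $m_z$, which is bounded with compact support so that $\int_{-z}^1 m_z\parens{t}\abs{t}^k\dif t<\infty$ and $\int_{-z}^1 m_z\parens{t}\,o\parens{h^{S+1}}\dif t=o\parens{h^{S+1}}$, and recalling $c_{msz}=\int_{-z}^1 m_z\parens{t}t^s\dif t\mathrel/s!$,
\[
\Exp\Nhat=f\parens{x}\sum_{k=0}^{S+1}\sP_k\parens{\beta_1,\dots,\beta_k}\,c_{mkz}\,h^k+o\parens{h^{S+1}}.
\]
The $k=0$ term equals $f\parens{x}\sP_0 c_{m0z}=f\parens{x}$, since $\sP_0\equiv1$ and $c_{m0z}=\int_{-z}^1 m_z=1$; subtracting it leaves the claimed $f\parens{x}\sum_{s=1}^{S+1}\sP_s\parens{\beta_1,\dots,\beta_s}c_{msz}h^s+o\parens{h^{S+1}}$.

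The only genuine work is the bookkeeping in the middle step: verifying that the low-order coefficients of the exponential of the truncated polynomial are exactly the complete Bell polynomials $\sP_0,\dots,\sP_{S+1}$ and that the truncation error is $O\parens{h^{S+2}}$, while keeping the Taylor remainder for $L$ and all of these error terms uniform over $t\in\sparens{-z,1}$. That uniformity, and the fact that we allow $x=zh$, is precisely why $L$ is assumed continuously differentiable on a full neighborhood of $x$ rather than merely to one side; everything after that is a routine integration of a power series against a bounded, compactly supported kernel.
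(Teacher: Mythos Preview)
Your argument is correct and reaches the same conclusion as the paper, but the route to the Bell polynomials is dual to the paper's. The paper Taylor expands $f$ itself to get $\Exp\Nhat = f\parens{x}+\sum_{s=1}^{S+1} f^{\parens{s}}\parens{x}\,c_{msz}h^s+o\parens{h^{S+1}}$ and then invokes Fa\`a di Bruno's formula for $f=\exp L$ to rewrite $f^{\parens{s}}\parens{x}=f\parens{x}\,\sP_s\parens{\beta_1,\dots,\beta_s}$. You instead Taylor expand $L$, exponentiate, and read off the coefficients via the exponential generating function identity $\exp\bigl(\sum_{j\ge1}\beta_j u^j/j!\bigr)=\sum_{k\ge0}\sP_k\parens{\beta_1,\dots,\beta_k}u^k/k!$. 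These are two standard presentations of the same algebraic fact; your version is a bit longer but makes the uniformity-in-$t$ bookkeeping explicit, while the paper's is shorter because smoothness of $f$ is immediate from smoothness of $L$ and the Bell-polynomial identity is packaged into a one-line citation of Fa\`a di Bruno.
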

\begin{proof}
	We have
\begin{multline*}
	\Exp\Nhat = \frac1h \int_{x-zh}^{x+h} m_z\parens[\Big]{\frac{y-x}h} f\parens{y} \dif y =
	\int_{-z}^1 m_z\parens{t} f\parens{x+th} \dif t
	=
	f\parens{x} + \sum_{s=1}^{S+1}  f^{\parens{s}}\parens{x} c_{msz} h^s + o\parens{h^{S+1}}
	\\
	=
	f\parens{x} + f\parens{x} \sum_{s=1}^{S+1} h^s \sP_s\parens{\beta_1,\dots,\beta_s} c_{msz}+ o\parens{h^{S+1}},
\end{multline*}
by Fa\`a di Bruno's theorem.
\end{proof}

\begin{lem} \label{lem:L den bias}
	Letting $\beta_s^* = \one\parens{s\leq S} \beta_s$, the bias in $\Dhat - 1$ is
\(
 \sum_{s=1}^{S+1}  \sP_s\parens{\beta_1^*,\dots,\beta_s^*} c_{msz} h^s + \beta_{S+1 }h^{S+1} \sum_{s=1}^S c_{msz} b_s  + o\parens{\psi_n}.
\)
\end{lem}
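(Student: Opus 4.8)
The plan is to split $\Dhat-1$ into the part due to having estimated $\beta$ and a purely deterministic expansion. Since $m_z$ integrates to one,
\begin{multline*}
\Dhat-1=\underbrace{\int_{-z}^{1}m_z(t)\cparens[\bigg]{\exp\parens[\Big]{\sum_{s=1}^{S}\frac{\bh_s t^s h^s}{s!}}-\exp\parens[\Big]{\sum_{s=1}^{S}\frac{\beta_s t^s h^s}{s!}}}\dif t}_{\text{(A)}}\\
{}+\underbrace{\int_{-z}^{1}m_z(t)\cparens[\bigg]{\exp\parens[\Big]{\sum_{s=1}^{S}\frac{\beta_s t^s h^s}{s!}}-1}\dif t}_{\text{(B)}}.
\end{multline*}
Term (A) is exactly the left-hand side of \cref{eq:exponential integrals approximation}, so \cref{lem:exponential integrals approximation} (which is where $nh^{3}\to\infty$ enters, via consistency of $\bh$) gives $\text{(A)}=\sum_{s=1}^{S}h^{s}\parens{\bh_s-\beta_s}c_{msz}+o_p\parens{\psi_n}$. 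Splitting $\bh_s-\beta_s=\parens{\tilde\beta_s-\beta_s}+\parens{\bh_s-\tilde\beta_s}$ and invoking \cref{thm:beta}, the first piece is deterministic with $h^{s}\parens{\tilde\beta_s-\beta_s}=h^{S+1}\beta_{S+1}b_s+o\parens{h^{S+1}}$, while the second is the asymptotically mean-zero, $O_p\parens{(nh^{3})^{-1/2}}$ fluctuation, whose contribution $h^{s}\parens{\bh_s-\tilde\beta_s}=O_p\parens{(nh)^{-1/2}}=O_p\parens{\psi_n}$ belongs to the variance, not the bias. Because $h^{S+1}=O\parens{\psi_n}$ and $o\parens{h^{S+1}}=o\parens{\psi_n}$, the bias component of (A) is $\beta_{S+1}h^{S+1}\sum_{s=1}^{S}c_{msz}b_s$.

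For the deterministic term (B) I would use the same Fa\`a di Bruno/complete exponential Bell polynomial device as in \cref{lem:L num bias}. Writing $\beta_s^{*}=\one\parens{s\le S}\beta_s$, the exponent is $\sum_{s\ge1}\beta_s^{*}t^{s}h^{s}/s!$, so the exponential generating-function identity yields $\exp\parens[\big]{\sum_{s\ge1}\beta_s^{*}t^{s}h^{s}/s!}=\sum_{k\ge0}\sP_k\parens{\beta_1^{*}t,\dots,\beta_k^{*}t^{k}}h^{k}/k!$, and the homogeneity relation $\sP_k\parens{c x_1,c^{2}x_2,\dots,c^{k}x_k}=c^{k}\sP_k\parens{x_1,\dots,x_k}$ with $c=t$ turns this into $\sum_{k\ge0}\sP_k\parens{\beta_1^{*},\dots,\beta_k^{*}}\parens{th}^{k}/k!$. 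Integrating term by term against $m_z$ — legitimate because $m_z$ has compact support and the series converges uniformly on $\sparens{-z,1}$ — and using $\int_{-z}^{1}m_z(t)\parens{th}^{k}\dif t/k!=h^{k}c_{mkz}$ together with $\sP_0=c_{m0z}=1$ gives $\text{(B)}=\sum_{k\ge1}\sP_k\parens{\beta_1^{*},\dots,\beta_k^{*}}c_{mkz}h^{k}$, whose tail $\sum_{k\ge S+2}$ is $O\parens{h^{S+2}}=o\parens{\psi_n}$; hence $\text{(B)}=\sum_{k=1}^{S+1}\sP_k\parens{\beta_1^{*},\dots,\beta_k^{*}}c_{mkz}h^{k}+o\parens{\psi_n}$. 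Adding the bias of (A) to (B) yields the claimed formula.

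The step I expect to require the most care is not computational but the order bookkeeping: under $nh^{2S+3}=O(1)$ the quantities $h^{S+1}$, $(nh)^{-1/2}$ and $\psi_n$ all share the same order, so one must be scrupulous in attributing the deterministic term $h^{S}\beta_{S+1}b$ of \cref{thm:beta} to the bias of $\Dhat$ at precisely the retained level while treating the $O_p\parens{(nh^{3})^{-1/2}}$ term of \cref{thm:beta} as a variance term that is (by design) excluded from this lemma. The remaining ingredients — uniform convergence for termwise integration and the homogeneity identity for Bell polynomials — are routine, and the Bell-polynomial manipulation is essentially a replay of the argument already used for \cref{lem:L num bias}.
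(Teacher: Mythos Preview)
Your proposal is correct and follows essentially the same route as the paper: the same (A)/(B) decomposition into the estimation-of-$\beta$ piece and the deterministic finite-order-expansion piece, handled respectively via \cref{lem:exponential integrals approximation} plus \cref{thm:beta} and via the Bell-polynomial identity. You are simply more explicit than the paper about the homogeneity step and about splitting $\bh_s-\beta_s$ through $\tilde\beta_s$ to isolate the bias from the fluctuation term.
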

\begin{proof}
The bias has two components: one is due to the estimation of $\beta$ and the other to the finite polynomial approximation.  Consider the latter first.  We have
\begin{multline*}
	\frac1h \int_{x-zh}^{x+h} m_z\parens[\Big]{\frac{y-x}h} \exp\parens[\Big]{ \sum_{s=1}^S \frac{\beta_s \parens{y-x}^s}{s!} } \dif y - 1
	=
	\int_{-z}^1 m_z\parens{t} \exp\parens[\Big]{ \sum_{s=1}^\infty \frac{ \beta_s^* t^s h^s}{s!}} \dif t -1
	=
	\\
	 \sum_{s=1}^{S+1}  \sP_s\parens{\beta_1^*,\dots,\beta_s^*} c_{msz} h^s + o\parens{h^{S+1}}.
\end{multline*}	
For the bias due to the estimation of $\beta$ note that by \cref{lem:exponential integrals approximation,thm:beta} this bias is equal to\\
\(
	 \sum_{s=1}^S h^s c_{msz} \parens{ \beta_{S+1} h^{S+1-s}  b_s } + o\parens{\psi_n}
	 =
	 \beta_{S+1 }h^{S+1} \sum_{s=1}^S c_{msz} b_s + o\parens{\psi_n}.
\)
\end{proof}
	
\begin{lem} \label{lem:L bias dominant term}
The bias in $\Nhat - f\parens{x} \Dhat$ is 
\(
f\parens{x} \beta_{S+1} h^{S+1} \parens[\big]{ c_{m,S+1,z} - \sum_{s=1}^S c_{msz} b_s} + o\parens{\psi_n}.
\)
\end{lem}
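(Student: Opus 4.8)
The plan is to combine the two bias expansions already established in \cref{lem:L num bias,lem:L den bias} via the linearization in \cref{eq:ND expansion}. Concretely, the bias in $\Nhat - f\parens{x}\Dhat$ is, to leading order, the bias in $\Nhat - f\parens{x}$ minus $f\parens{x}$ times the bias in $\Dhat - 1$. Substituting the two lemmas, the leading terms are
\[
f\parens{x}\sum_{s=1}^{S+1}\sP_s\parens{\beta_1,\dots,\beta_s}c_{msz}h^s
-
f\parens{x}\sum_{s=1}^{S+1}\sP_s\parens{\beta_1^*,\dots,\beta_s^*}c_{msz}h^s
-
f\parens{x}\beta_{S+1}h^{S+1}\sum_{s=1}^S c_{msz}b_s + o\parens{\psi_n}.
\]

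The key observation is that $\beta_s^* = \beta_s$ for all $s \leq S$ and $\beta_{S+1}^* = 0$, so the complete Bell polynomials $\sP_s\parens{\beta_1,\dots,\beta_s}$ and $\sP_s\parens{\beta_1^*,\dots,\beta_s^*}$ agree for every $s = 1,\dots,S$: the first $S$ terms in the two sums cancel exactly. Only the $s = S+1$ term survives, and there the difference is
\(
\sP_{S+1}\parens{\beta_1,\dots,\beta_{S+1}} - \sP_{S+1}\parens{\beta_1,\dots,\beta_S,0}.
\)
Since the complete exponential Bell polynomial $\sP_{S+1}$ is affine in its last argument with coefficient $\binom{S+1}{1}\sP_S/\dots$ — more precisely, the only monomial in $\sP_{S+1}$ containing $\beta_{S+1}$ is $\beta_{S+1}$ itself (the partition of $S+1$ into a single block) — this difference is exactly $\beta_{S+1}$. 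Hence the surviving contribution from the first two sums is $f\parens{x}\beta_{S+1}h^{S+1}c_{m,S+1,z}$, and combining with the third term gives $f\parens{x}\beta_{S+1}h^{S+1}\parens[\big]{c_{m,S+1,z} - \sum_{s=1}^S c_{msz}b_s}$, as claimed.

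The remaining task is to justify the linearization \cref{eq:ND expansion} at the level of biases, i.e.\ to check that the cross term $\parens{N/D}\cdot o$-type remainders and the higher-order Taylor remainder in $1/\Dhat$ are $o\parens{\psi_n}$ rather than merely $o\parens{h^{S+1}}$. Here one uses that $\Dhat - 1 = O_p\parens{h}$ (it is a biased estimator of one with bias $O\parens{h}$ and fluctuations of smaller order than $\psi_n$ by \cref{lem:exponential integrals approximation,thm:beta}), that $\Nhat - f\parens{x} = O_p\parens{h} + O_p\parens{1/\sqrt{nh}}$, and that $\psi_n = n^{-\parens{S+1}/\parens{2S+3}}$ dominates $h^{S+2}$ under the maintained bandwidth conditions, so the quadratic-in-$h$ remainder from expanding $\Nhat/\Dhat$ is absorbed into $o\parens{\psi_n}$. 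I expect the only mildly delicate point to be bookkeeping the Bell-polynomial cancellation cleanly — verifying that $\sP_{S+1}$ depends on $\beta_{S+1}$ only through the additive term $\beta_{S+1}$ — since everything else is a direct substitution of the preceding lemmas into \cref{eq:ND expansion}.
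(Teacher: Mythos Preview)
Your proposal is correct and follows the paper's proof essentially verbatim: substitute \cref{lem:L num bias,lem:L den bias}, cancel the Bell-polynomial sums for $s\le S$ using $\beta_s^*=\beta_s$, and use that $\sP_{S+1}$ depends on $\beta_{S+1}$ only through the single additive term $\beta_{S+1}$. One small clarification: your final paragraph about justifying the linearization \cref{eq:ND expansion} is unnecessary here, since the lemma concerns $\Nhat - f\parens{x}\Dhat$ itself, which is already linear in $\parens{\Nhat,\Dhat}$; the passage from $\Nhat/\Dhat$ to this linear combination is handled separately in \cref{lem:L omitted terms}.
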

\begin{proof}
By \cref{lem:L num bias,lem:L den bias}, the desired bias is
\begin{multline*}
f\parens{x} \sum_{s=1}^{S+1} \sP_s\parens{\beta_1,\dots,\beta_s} c_{msz}h^s - f\parens{x} \parens[\bigg]{\sum_{s=1}^{S+1}  \sP_s\parens{\beta_1^*,\dots,\beta_s^*} c_{msz} h^s + \beta_{S+1 }h^{S+1} \sum_{s=1}^S c_{msz} b_s} + o\parens{\psi_n}
\\
=
f\parens{x} \beta_{S+1} h^{S+1} \parens[\Big]{ c_{m,S+1,z} - \sum_{s=1}^S c_{msz} b_s} + o\parens{\psi_n},
\end{multline*}
as claimed. 
\end{proof}

\begin{lem} \label{lem:Dhat}
$\Dhat -1 = o_p\parens{1}$.
\end{lem}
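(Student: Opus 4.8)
The plan is to show that the (random) exponent in the definition of $\Dhat$ is uniformly negligible on the domain of integration, so that $\Dhat$ converges in probability to $\int_{-z}^1 m_z$, which equals one.

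First I would extract from \cref{thm:beta} the elementary consequence that $h^{s}\bh_s = o_p(1)$ for each $s=1,\dots,S$. Since the $s$--th diagonal entry of $\Lambda$ is $h^{s-1}$, \cref{thm:beta} gives $h^{s-1}\parens{\bh_s-\beta_s} = h^{S}\beta_{S+1}b_s + o_p\parens{h^{S}} + O_p\parens[\big]{(nh^{3})^{-1/2}} = o_p(1)$ because $h\to0$ and $nh^{3}\to\infty$; multiplying by $h$ and using that $\beta_s$ is a fixed constant yields $h^{s}\bh_s = h^{s}\beta_s + o_p(h) = o_p(1)$. Next, since $z\in[0,1]$ we have $\abs{t}\le 1$ for $t\in[-z,1]$, so
\[
\epsilon_n \defeq \sup_{t\in[-z,1]}\abs[\Big]{\sum_{s=1}^{S}\frac{\bh_s t^{s}h^{s}}{s!}} \le \sum_{s=1}^{S}\frac{\abs{h^{s}\bh_s}}{s!} = o_p(1).
\]
Then, using the elementary bound $\abs{\e^{u}-1}\le\abs{u}\e^{\abs{u}}$ together with $m_z\ge 0$ and $\int_{-z}^1 m_z(t)\dif t = 1$,
\[
\abs{\Dhat - 1} = \abs[\Big]{\int_{-z}^{1} m_z(t)\parens[\Big]{\exp\parens[\Big]{\sum_{s=1}^{S}\frac{\bh_s t^{s}h^{s}}{s!}}-1}\dif t} \le \epsilon_n\,\e^{\epsilon_n} = o_p(1),
\]
which is the claim.

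There is essentially no obstacle here: the only substantive input is the rate statement of \cref{thm:beta}, which guarantees $h^{s}\bh_s\to0$ in probability, and everything else is a routine uniform estimate on a bounded interval. Alternatively, one can reach the same conclusion by invoking \cref{lem:exponential integrals approximation} to replace $\bh$ by $\beta$ in the integral up to an $o_p(1)$ error, and then noting that $\int_{-z}^{1} m_z(t)\exp\parens[\big]{\sum_{s=1}^{S}\beta_s t^{s}h^{s}/s!}\dif t \to \int_{-z}^{1}m_z(t)\dif t = 1$ as $h\to0$ by dominated convergence.
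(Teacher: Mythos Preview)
Your argument is correct. Your primary route is more elementary than the paper's: you bound the random exponent uniformly on $[-z,1]$ using only $h^{s}\bh_s=o_p(1)$ from \cref{thm:beta}, then apply $\abs{\e^{u}-1}\le\abs{u}\e^{\abs{u}}$ and $\int_{-z}^{1}m_z=1$. The paper instead invokes \cref{lem:exponential integrals approximation} and \cref{lem:L den bias} together with \cref{thm:beta}, which amounts to your alternative sketch at the end: split $\Dhat-1$ into the estimation-error piece $\sum_{s}h^{s}(\bh_s-\beta_s)c_{msz}+o_p(\psi_n)$ and the deterministic piece $\int_{-z}^{1}m_z(t)\exp\parens[\big]{\sum_{s}\beta_s t^{s}h^{s}/s!}\dif t-1$, each of which is $o_p(1)$. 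The paper's route recycles lemmas that carry sharper rate information needed later (e.g.\ for \cref{lem:L omitted terms}); your direct bound is cleaner for the bare consistency statement but does not by itself yield those rates.
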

\begin{proof}
Follows from \cref{lem:exponential integrals approximation,thm:beta,lem:L den bias}.  
\end{proof}
	
\subsubsection{Distribution}
	
\begin{lem} \label{lem:limit distribution dominant term}
$\sqrt{nh} \cparens[\big]{\Nhat - f\parens{x}\Dhat} \convd N\parens{\sB,\sV }$.	
\end{lem}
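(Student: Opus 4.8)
The plan is to split $\sqrt{nh}\cparens{\Nhat - f\parens{x}\Dhat}$ into a deterministic (bias) part, which converges to $\sB$, and a stochastic part, which — once the estimation error in $\bh$ has been linearized via \cref{thm:beta} — collapses into a single i.i.d.\ sum with influence function $\Omega_z$, to which a triangular--array central limit theorem applies. Concretely, I would write
\[
 \Nhat - f\parens{x}\Dhat = \Exps[\big]{\Nhat - f\parens{x}\Dhat} + \cparens{\Nhat - \Exp\Nhat} - f\parens{x}\cparens{\Dhat - \Exp\Dhat}.
\]
By \cref{lem:L bias dominant term} the first term equals $f\parens{x}\beta_{S+1}h^{S+1}\parens{c_{m,S+1,z} - c_{mz}^\tr b} + o\parens{\psi_n}$; since $\psi_n\asymp h^{S+1}\asymp\parens{nh}^{-1/2}$ under the assumption $nh^{2S+3}\to\Xi_f^2<\infty$, multiplying by $\sqrt{nh}$ and using $\sqrt{nh}\,h^{S+1} = \sqrt{nh^{2S+3}}\to\Xi_f$ produces the limiting mean $\Xi_f f\parens{x}\beta_{S+1}\parens{c_{m,S+1,z} - c_{mz}^\tr b} = \sB$, with the $o\parens{\psi_n}$ remainder becoming $o_p\parens{1}$.

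For the stochastic part, note that $\sqrt{nh}\cparens{\Nhat - \Exp\Nhat} = \parens{nh}^{-1/2}\sum_{i=1}^n\cparens{m_z\parens{\parens{\R x_i - x}/h} - \Exp m_z\parens{\parens{\R x_1 - x}/h}}$ since $\Nhat$ is a Rosenblatt--Parzen estimator. For $\Dhat$, \cref{lem:exponential integrals approximation} replaces the dependence on $\bh$ by $\sum_{s=1}^S h^s c_{msz}\parens{\bh_s - \beta_s} + o_p\parens{\psi_n}$; splitting $\bh_s - \beta_s = \parens{\bh_s - \bt_s} + \parens{\bt_s - \beta_s}$, noting that the bias piece $\bt_s-\beta_s$ is deterministic, and checking that the second--order bias of the ratio $\bh-\bt$ contributes to $\Exp\Dhat$ only at order $o\parens{\parens{nh}^{-1/2}}$, one gets $\Dhat - \Exp\Dhat = \sum_{s=1}^S h^s c_{msz}\parens{\bh_s - \bt_s} + o_p\parens{\parens{nh}^{-1/2}} = h\,c_{mz}^\tr\Lambda_h\parens{\bh - \bt} + o_p\parens{\parens{nh}^{-1/2}}$. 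Multiplying by $\sqrt{nh}$, applying the linearization \cref{eq:nasty expansion} together with $B_{hx}\to f\parens{x}\Omega$ from \cref{lem:B}, the representation of $\sqrt{nh^3}\parens{\ah_{hx} - a_{hx}}$ as the i.i.d.\ sum $\parens{nh}^{-1/2}\sum_i\cparens{g'\parens{\parens{\R x_i - x}/h} - \Exp g'\parens{\parens{\R x_1 - x}/h}}$ established in the proof of \cref{lem:ugh}, and $f\parens{x} c_{mz}^\tr B_{hx}^{-1}\to c_{mz}^\tr\Omega^{-1}$, I obtain
\[
 \sqrt{nh}\,f\parens{x}\cparens{\Dhat - \Exp\Dhat} = \frac{1}{\sqrt{nh}}\sum_{i=1}^n c_{mz}^\tr\Omega^{-1}\cparens[\Big]{ g'\parens[\Big]{\tfrac{\R x_i - x}{h}} - \Exp g'\parens[\Big]{\tfrac{\R x_1 - x}{h}} } + o_p\parens{1}.
\]

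Subtracting this from $\sqrt{nh}\cparens{\Nhat - \Exp\Nhat}$, and recalling $\Omega_z = m_z - c_{mz}^\tr\Omega^{-1}g'$, the stochastic part equals $\parens{nh}^{-1/2}\sum_{i=1}^n\cparens{\Omega_z\parens{\parens{\R x_i - x}/h} - \Exp\Omega_z\parens{\parens{\R x_1 - x}/h}} + o_p\parens{1}$. A standard CLT for bounded i.i.d.\ triangular arrays (cf.\ \citet{eicker1966multivariate}) then gives asymptotic normality with variance $h^{-1}\V{\Omega_z\parens{\parens{\R x_1 - x}/h}} = f\parens{x}\int_{-z}^1\Omega_z^2\parens{t}\dif t + o\parens{1} = \sV$, the squared--mean correction being $O\parens{h}$. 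Combined with the bias limit this yields $\sqrt{nh}\cparens{\Nhat - f\parens{x}\Dhat}\convd N\parens{\sB,\sV}$. The main obstacle is bookkeeping of the remainders at the $\sqrt{nh}$ scale: one must verify that every $o_p\parens{\psi_n}$ and $o_p\parens{\parens{nh}^{-1/2}}$ above is genuinely $o_p\parens{\parens{nh}^{-1/2}}$ so that it vanishes after multiplication by $\sqrt{nh}$ (this is exactly where the bandwidth restriction, equivalently $\psi_n\asymp\parens{nh}^{-1/2}$, and the negligibility of the ratio bias of $\bh-\bt$, are used), and — more conceptually — to recognize that the fluctuations of $\Nhat$ and of $\Dhat$ (through $\bh$) are driven by the \emph{same} i.i.d.\ sum, so that they combine linearly into the single influence function $\Omega_z$ and one univariate CLT suffices rather than a separate treatment of their covariance.
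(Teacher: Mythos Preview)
Your proposal is correct and follows essentially the same route as the paper's own proof: extract the bias via \cref{lem:L bias dominant term}, linearize $\Dhat$ in $\bh$ using \cref{lem:exponential integrals approximation}, then linearize $\bh-\tilde\beta$ via \cref{eq:nasty expansion} and \cref{lem:B} to collapse the stochastic part into a single i.i.d.\ sum with influence function $\Omega_z = m_z - c_{mz}^\tr\Omega^{-1}g'$, and finish with a triangular--array CLT. Your writeup is, if anything, slightly more explicit than the paper's about the $\bh-\beta=(\bh-\tilde\beta)+(\tilde\beta-\beta)$ split and the remainder bookkeeping at the $\sqrt{nh}$ scale.
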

\begin{proof}
The bias result follows from \cref{lem:L bias dominant term} and the definition of $\Xi$.  For asymptotic normality and the variance formula, note that 
\[
 \Nhat - \Exp \Nhat = \frac1{nh} \sum_{i=1}^n \cparens[\bigg]{
m_z\parens[\Big]{\frac{\R x_i -x}h} - \Exp m_z\parens[\Big]{\frac{\R x_1 -x}h}.
}
\]
Note further that by \cref{lem:exponential integrals approximation}, the asymptotic distribution of $\Dhat$ net of bias is governed by
\(
  \sum_{s=1}^S h^s\parens{\bh_s-\tilde \beta_s} c_{msz},
\)
which by \cref{eq:nasty expansion,eq:ugh,lem:B} is
\[
 h c_{mz}^\tr \Lambda_h \parens{\bh - \tilde \beta} =
 \frac{c_{mz}^\tr \Omega^{-1}}{f\parens{x}}  \frac1{nh} \sum_{i=1}^n 
 \cparens[\bigg]{
g'\parens[\Big]{ \frac{\R x_i-x}h} - \Exp   g'\parens[\Big]{ \frac{\R x_1-x}h} 
} + o_p\parens{\psi_n}.
\]
Thus,
\begin{multline*}
 \sqrt{nh} \cparens{\Nhat - f\parens{x} \Dhat - \Exp\parens{\cdot}} 
 =
 \frac1{\sqrt{nh}} \sum_{i=1}^n \cparens[\bigg]{
 	m_z\parens[\Big]{\frac{\R x_i -x}h} -  c_{mz}^\tr \Omega^{-1}  g'\parens[\Big]{ \frac{\R x_i-x}h}
 	-
 \Exp\parens\cdot  
 } + o_p\parens{1}
\\
=
\frac1{\sqrt{nh}} \sum_{i=1}^n \cparens[\bigg]{
	\omega_z\parens[\Big]{\frac{\R x_i -x}h} -  \Exp 	\omega_z\parens[\Big]{\frac{\R x_1 -x}h} } + o_p\parens{1},
\end{multline*}
which has the stated limit distribution by e.g.\ \citet{eicker1966multivariate}.
\end{proof}

\begin{lem} \label{lem:L omitted terms}
The omitted terms in \cref{eq:ND expansion} are $o_p\parens{\psi_n}$. 
\end{lem}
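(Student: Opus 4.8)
The plan is to exhibit the ``omitted terms'' of \cref{eq:ND expansion} explicitly and then dispatch them with a short Slutsky--type argument. Observe first that, after the substitution $N=f\parens{x}$ and $D=1$, the right--most expression in \cref{eq:ND expansion} equals \emph{exactly} $\Nhat-f\parens{x}\Dhat$, so the second $\simeq$ introduces no error; all of the slack sits in the first $\simeq$. Since $\fhat\parens{x}=\Nhat/\Dhat$, the omitted terms are therefore
\[
 \mathcal{E}\defeq\parens[\Big]{\frac{\Nhat}{\Dhat}-f\parens{x}}-\cparens[\big]{\Nhat-f\parens{x}\Dhat}
 =\frac{\Nhat-f\parens{x}\Dhat}{\Dhat}-\cparens[\big]{\Nhat-f\parens{x}\Dhat}
 =\cparens[\big]{\Nhat-f\parens{x}\Dhat}\,\frac{1-\Dhat}{\Dhat},
\]
and it remains to show $\mathcal{E}=o_p\parens{\psi_n}$.

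Next I would bound the three factors using what is already available. By \cref{lem:limit distribution dominant term}, $\sqrt{nh}\cparens{\Nhat-f\parens{x}\Dhat}\convd N\parens{\sB,\sV}$, hence $\Nhat-f\parens{x}\Dhat=O_p\parens[\big]{\parens{nh}^{-1/2}}$; moreover \cref{thm:beta} and the definition of $\psi_n$ give $\psi_n\asymp\max\cparens[\big]{h^{S+1},\parens{nh}^{-1/2}}$, while the bandwidth condition $\lim nh^{2S+3}<\infty$ of \cref{thm:f} forces $h^{S+1}=O\parens[\big]{\parens{nh}^{-1/2}}$, so $\psi_n\asymp\parens{nh}^{-1/2}$ and hence $\Nhat-f\parens{x}\Dhat=O_p\parens{\psi_n}$. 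By \cref{lem:Dhat}, $\Dhat-1=o_p\parens1$, so $\Dhat\cip1$ and, by the continuous mapping theorem, $1/\Dhat\cip1$, in particular $1/\Dhat=O_p\parens1$. Multiplying, $\mathcal{E}=O_p\parens{\psi_n}\cdot o_p\parens1\cdot O_p\parens1=o_p\parens{\psi_n}$, which is the claim. As a byproduct, since $\psi_n\asymp\parens{nh}^{-1/2}$ here, $\sqrt{nh}\,\mathcal{E}=o_p\parens1$, so the limit law of \cref{lem:limit distribution dominant term} transfers verbatim to $\fhat\parens{x}-f\parens{x}$, finishing \cref{thm:f}.

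I do not expect a genuine obstacle: once the omitted terms are rewritten in the factored form above, the rest is bookkeeping with rates already established. The only two points that merit a moment's attention are (i) recognizing that the second $\simeq$ in \cref{eq:ND expansion} is an exact identity after the substitution $N=f\parens{x}$, $D=1$, so that the entire remainder is the single ratio--expansion term $\cparens{\Nhat-f\parens{x}\Dhat}\parens{1-\Dhat}/\Dhat$; and (ii) checking that $\Nhat-f\parens{x}\Dhat=O_p\parens{\psi_n}$ is consistent with the definition of $\psi_n$ and with the bandwidth restriction of \cref{thm:f}, as indicated above.
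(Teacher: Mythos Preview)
Your proposal is correct and follows essentially the same route as the paper: the paper writes $\Nhat/\Dhat-f\parens{x}=\cparens{\Nhat-f\parens{x}\Dhat}+\cparens{\Nhat-f\parens{x}\Dhat}\parens{1/\Dhat-1}$, which is algebraically identical to your factored remainder $\cparens{\Nhat-f\parens{x}\Dhat}\parens{1-\Dhat}/\Dhat$, and then invokes \cref{lem:Dhat} together with \cref{lem:limit distribution dominant term,lem:L bias dominant term} to conclude the extra term is $o_p\parens{\psi_n}$. Your rate bookkeeping for $\psi_n\asymp\parens{nh}^{-1/2}$ under the bandwidth condition of \cref{thm:f} is slightly more explicit than the paper's, but the argument is the same.
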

\begin{proof}
Write
\(
\Nhat/\Dhat - f\parens{x}/1 = 
\cparens{\Nhat - f\parens{x} \Dhat} + \cparens{\Nhat - f\parens{x} \Dhat} \parens{1/\Dhat -1}
	\simeq
\Nhat - f\parens{x} \Dhat. 	
\)
by \cref{lem:Dhat}.
Apply \cref{lem:limit distribution dominant term,lem:L bias dominant term}.
\end{proof}

\begin{prooft}{\cref{thm:f}}
We show the result for $\fhat$ where the result for $\Lhat$ follows from the delta method.	
By \cref{lem:L omitted terms} we only need to consider $\Nhat - f\parens{x}\Dhat$.  Apply \cref{lem:limit distribution dominant term,lem:L bias dominant term}.
\end{prooft}

\setstretch{1}
{\small
\bibliographystyle{apa}
\bibliography{boundest.bib}

\begin{thebibliography}{}

\bibitem[\protect\astroncite{Bell}{1927}]{bell1927partition}
Bell, E.~T. (1927).
\newblock Partition polynomials.
\newblock {\em Annals of Mathematics}, pages 38--46.

\bibitem[\protect\astroncite{Cattaneo et~al.}{2019}]{Cattaneo2019simple}
Cattaneo, M.~D., Jansson, M., and Ma, X. (2019).
\newblock {Simple local polynomial density estimators}.
\newblock {\em Journal of the American Statistical Association}, pages 1--7.

\bibitem[\protect\astroncite{Cheng et~al.}{1997}]{Cheng1997auto}
Cheng, M.-Y., Fan, J., and Marron, J.~S. (1997).
\newblock {On automatic boundary corrections}.
\newblock {\em Annals of Statistics}, 25(4):1691--1708.

\bibitem[\protect\astroncite{Eicker}{1966}]{eicker1966multivariate}
Eicker, F. (1966).
\newblock A multivariate central limit theorem for random linear vector forms.
\newblock {\em Annals of Mathematical Statistics}, pages 1825--1828.

\bibitem[\protect\astroncite{Epanechnikov}{1969}]{Epanechnikov1969}
Epanechnikov, V.~A. (1969).
\newblock Nonparametric estimation of a multidimensional probability density.
\newblock {\em Theory of Probability and its Applications}, 14:156--161.

\bibitem[\protect\astroncite{Gasser and M{\"u}ller}{1979}]{gasser1979kernel}
Gasser, T. and M{\"u}ller, H.-G. (1979).
\newblock Kernel estimation of regression functions.
\newblock In {\em Smoothing techniques for curve estimation}, pages 23--68.
  Springer.

\bibitem[\protect\astroncite{Guerre et~al.}{2000}]{Guerre2000}
Guerre, E., Perrigne, I., and Vuong, Q. (2000).
\newblock {Optimal nonparametric estimation of first--price auctions}.
\newblock {\em Econometrica}, 68(3):525--574.

\bibitem[\protect\astroncite{Hickman and Hubbard}{2015}]{hickman2015replacing}
Hickman, B.~R. and Hubbard, T.~P. (2015).
\newblock Replacing sample trimming with boundary correction in nonparametric
  estimation of first-price auctions.
\newblock {\em Journal of Applied Econometrics}, 30(5):739--762.

\bibitem[\protect\astroncite{Hirano et~al.}{2003}]{hirano2003efficient}
Hirano, K., Imbens, G.~W., and Ridder, G. (2003).
\newblock Efficient estimation of average treatment effects using the estimated
  propensity score.
\newblock {\em Econometrica}, 71(4):1161--1189.

\bibitem[\protect\astroncite{Hjort and Jones}{1996}]{Hjort1996local}
Hjort, N.~L. and Jones, M.~C. (1996).
\newblock Locally nonparametric density estimation.
\newblock {\em Annals of Statistics}, 24(4):1619--1647.

\bibitem[\protect\astroncite{Jones and Foster}{1996}]{jones1996simple}
Jones, M. and Foster, P. (1996).
\newblock A simple nonnegative boundary correction method for kernel density
  estimation.
\newblock {\em Statistica Sinica}, pages 1005--1013.

\bibitem[\protect\astroncite{Karunamuni and
  Alberts}{2005}]{Karunamuni2005boundary}
Karunamuni, R.~J. and Alberts, T. (2005).
\newblock {On boundary correction in kernel density estimation}.
\newblock {\em Statistical Methodology}, 2(3):191--212.

\bibitem[\protect\astroncite{Karunamuni and Zhang}{2008}]{Karunamuni2008some}
Karunamuni, R.~J. and Zhang, S. (2008).
\newblock {Some improvements on a boundary corrected kernel density estimator}.
\newblock {\em Statistics and Probability Letters}, 78(5):499--507.

\bibitem[\protect\astroncite{Klein and Spady}{1993}]{klein1993efficient}
Klein, R.~W. and Spady, R.~H. (1993).
\newblock An efficient semiparametric estimator for binary response models.
\newblock {\em Econometrica}, pages 387--421.

\bibitem[\protect\astroncite{Lejeune and Sarda}{1992}]{Lejeune1992smooth}
Lejeune, M. and Sarda, P. (1992).
\newblock {Smooth Estimators of Distribution and Density Functions}.
\newblock {\em Computational Statistics {\&} Data Analysis}, 14:457--471.

\bibitem[\protect\astroncite{Lewbel and Schennach}{2007}]{lewbel2007simple}
Lewbel, A. and Schennach, S.~M. (2007).
\newblock A simple ordered data estimator for inverse density weighted
  expectations.
\newblock {\em Journal of Econometrics}, 136(1):189--211.

\bibitem[\protect\astroncite{Loader}{1996}]{Loader1996likelihood}
Loader, C.~R. (1996).
\newblock {Local likelihood density estimation}.
\newblock {\em Annals of Statistics}, 24(4):1602--1618.

\bibitem[\protect\astroncite{Muller}{1984}]{Muller1984}
Muller, H.-G. (1984).
\newblock Smooth optimum kernel estimators of densities, regression curves and
  modes.
\newblock {\em Annals of Statistics}, 12(2):766--774.

\bibitem[\protect\astroncite{Pinkse and Schurter}{2019}]{pinkse2019estimation}
Pinkse, J. and Schurter, K. (2019).
\newblock Estimation of auction models with shape restrictions.
\newblock Pennsylvania State University working paper.

\bibitem[\protect\astroncite{Zhang and
  Karunamuni}{1998}]{Karunamuni1998endpoints}
Zhang, S. and Karunamuni, R.~J. (1998).
\newblock {On kernel density estimation near endpoints}.
\newblock {\em Journal of Statistical Planning and Inference}, 70:301--316.

\end{thebibliography}
}
\end{document}